\definecolor{darkgreen}{rgb}{0,0.6,0}
\theoremstyle{plain}
\newtheorem{theorem}{Theorem}
\newtheorem{lemma}[theorem]{Lemma}
\newtheorem{proposition}[theorem]{Proposition}
\newtheorem{corollary}[theorem]{Corollary}
\theoremstyle{definition}
\newtheorem{definition}[theorem]{Definition}
\newtheorem{remark}[theorem]{Remark}
\DeclareMathOperator{\val}{Val}
\DeclareMathOperator{\size}{size}
\newcommand\wronskian{\mathsf{W}}
\newcommand\FFp{\mathbb F_{\!p}}
\newcommand\FFps{\mathbb F_{\!p^s}}
\newcommand\KK{\mathbb K}
\newcommand\QQ{\mathbb Q}
\newcommand\ZZ{\mathbb Z}
\begin{document}

\title{Factoring bivariate lacunary polynomials without heights\thanks{Part of this work was done while the authors were visiting the University of Toronto.} }
\author{
    Arkadev Chattopadhyay\thanks{School of Technology and Computer Science, Tata Institute for Fundamental Research, \texttt{arkadev.c@tifr.res.in}.}
    \and Bruno Grenet\thanks{LIP, UMR 5668 ENS Lyon - CNRS - UCBL - INRIA, Université de Lyon, \texttt{\{bruno.grenet,pascal.koiran,natacha.portier\}@ens-lyon.fr}.}
    \and Pascal Koiran\footnotemark[3]
    \and Natacha Portier\footnotemark[3]
    \and Yann Strozecki\thanks{LRI -- Université Paris-Sud XI, \texttt{strozecki@logique.jussieu.fr}.}
}

\date{\today}
\maketitle

\begin{abstract}
 We present an algorithm which computes the multilinear factors of bivariate lacunary polynomials.
It is based on a new Gap theorem which allows to test whether $P(X)=\sum_{j=1}^k a_j X^{\alpha_j}(1+X)^{\beta_j}$
is identically zero in polynomial time.
The algorithm we obtain is more elementary than the one by Kaltofen and Koiran (ISSAC'05)
 since it relies on the valuation of polynomials of the previous form 
instead of the height of the coefficients. As a result, it can be used to find 
some linear factors of bivariate lacunary polynomials over a field of large finite characteristic
 in probabilistic polynomial time.
\end{abstract}

\newpage
\section{Introduction}

The \emph{lacunary}, or \emph{supersparse}, representation of a polynomial
\[P(X_1,\dotsc,X_n)=\sum_{j=1}^k a_j X_1^{\alpha_{1,j}}\dotsm X_n^{\alpha_{n,j}}\]
is the list of the tuples $(a_j, \alpha_{1,j},\dotsc,\alpha_{n,j})$ for $1\le j\le k$. This representation allows very high degree polynomials to be represented in a concise manner. The factorization of lacunary polynomials has been investigated in a series of papers. Cucker, Koiran and Smale first proved that integer roots of univariate integer lacunary polynomials can be found in polynomial time~\cite{CKS99}. This result was generalized by Lenstra who proved that low-degree factors of univariate lacunary polynomials over algebraic number fields can also be found in polynomial time~\cite{Len99}. More recently, Kaltofen and Koiran generalized Lenstra's results to bivariate and then multivariate lacunary polynomials~\cite{KK05,KK06}. A common point to these algorithms is that they all rely on a so-called \emph{Gap Theorem}: If $F$ is a factor of $P(\bar X)=\sum_{j=1}^t a_j \bar X^{\bar\alpha_j}$, then there exists $k_0$ such that $F$ is a factor of both $\sum_{j=1}^{k_0} a_j\bar X^{\bar\alpha_j}$ and $\sum_{j=k_0+1}^k a_j\bar X^{\bar\alpha_j}$. Moreover, the different Gap Theorems in these papers are all based on the notion of height of an algebraic number, and some of them use quite sophisticated results of number theory.

In this paper, we are interested in more elementary proofs for some of these results. We focus on Kaltofen and Koiran's first paper~\cite{KK05} dealing with linear factors of bivariate lacunary polynomials. We show how a Gap Theorem that does not depend on the height of an algebraic number can be proved. In particular, our Gap Theorem is valid for any field of characteristic zero. As a result, we get a new, more elementary algorithm for finding linear factors of bivariate lacunary polynomials over an algebraic number field. In particular, this new algorithm is easier to implement since there is no need to explicitly compute some constants from number theory, and the use of the Gap Theorem does not require to evaluate the heights of the coefficients of the polynomial. 
Moreover we use the same methods to prove a Gap Theorem for polynomials over some fields of positive characteristic, yielding an algorithm to find linear factors of bivariate lacunary polynomials 
of the form $(uX+vY+w)$ with $uvw\neq0$. 
Finding linear factors with $u=0$ is $\NP$-hard, and the same is true for 
linear factors with $v=0$ or $w=0$.
This follows from the fact that  finding univariate linear factors over finite fields is $\NP$-hard~\cite{KiSha99,BCR12,KL13}. 
In algebraic number fields we can find {\em all} linear factors in polynomial time, even those with $uvw=0$. For this we rely as Kaltofen and Koiran 
on Lenstra's univariate algorithm~\cite{Len99}.

Our Gap Theorem is based on the valuation of a univariate polynomial, defined as the maximum integer $v$ such that $X^v$ divides the polynomial. We give an upper bound on the valuation of a nonzero polynomial
\[P(X)=\sum_{j=1}^k a_j X^{\alpha_j}(1+X)^{\beta_j}.\]
This bound can be viewed as an extension of a result due to Haj\'os~\cite{Hajos,MS77}. We also note that Kayal and Saha recently used the valuation of square roots of polynomials to make some progress on the ``Sum of Square Roots'' problem~\cite{KS11}. 

Lacunary polynomials have also been studied with respect to other computational tasks. For instance, Plaisted showed the $\NP$-hardness of computing the greatest common divisor (GCD) of two univariate integer lacunary polynomials~\cite{Pla77}, and his results were extended to finite fields~\cite{vzGKS96,KaShp99,KK05}. On the other hand, some important special cases were identified for which the GCD of two lacunary polynomials can be computed in polynomial time~\cite{FGS08}. Other efficient algorithms for lacunary polynomials have been recently given, for instance for the detection of perfect powers~\cite{GR08,GR11} or interpolation~\cite{KN11}.

\paragraph{Acknowledgments.} We wish to thank Sébastien Tavenas for his help on Proposition~\ref{prop:LowerBound}, 
and Erich L. Kaltofen for pointing us out a mistake in Theorem~\ref{thm:FactorPosChar} in a previous version of this paper.

\section{Bound on the valuation}\label{sec:val}

In this section, we consider a field $\KK$ of characteristic zero and polynomials over $\KK$.

\begin{theorem}\label{thm:val}
Let $P=\sum_{j=1}^{k} a_j X^{\alpha_j} (1+X)^{\beta_j}$ with $\alpha_1\leq \dots \leq \alpha_k$.
If $P$ is not identically zero then its valuation is at most $\max_j (\alpha_j+\binom{k+1-j}{2})$.
\end{theorem}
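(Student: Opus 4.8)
The plan is to induct on the number $k$ of terms, reducing first to a favorable normal form and then reading off the bound from a Wronskian taken with respect to the derivation $\partial=X(1+X)\frac{d}{dX}$ rather than $\frac{d}{dX}$. For the normal form: if some $a_j=0$ we delete that term and invoke the induction hypothesis, using the easy fact that deleting a term never increases $\max_j(\alpha_j+\binom{k+1-j}{2})$; so we may assume all $a_j\ne 0$. Writing $f_j=X^{\alpha_j}(1+X)^{\beta_j}$, if the $f_j$ are linearly dependent over $\KK$ we pick greedily, scanning $j=1,2,\dots$, a maximal linearly independent subfamily $f_{j_1},\dots,f_{j_d}$ with $j_1<\dots<j_d$ and $d<k$, express $P$ in this basis, and apply the induction hypothesis to obtain $\val(P)\le\max_l(\alpha_{j_l}+\binom{d+1-l}{2})$; since $j_l-l$ is the number of indices $\le j_l$ skipped by the greedy scan we have $j_l-l\le k-d$, whence $\binom{d+1-l}{2}\le\binom{k+1-j_l}{2}$ and the bound for $P$ follows. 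Thus we may assume $a_1\ne 0$ and $f_1,\dots,f_k$ linearly independent; as $\KK$ has characteristic zero the ordinary Wronskian $\wronskian(f_1,\dots,f_k)$ is then nonzero, and all the determinants appearing below are nonzero as well.

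The derivation $\partial=X(1+X)\frac{d}{dX}$ has exactly the two properties the argument needs. First, $\partial f_j=(\alpha_j+(\alpha_j+\beta_j)X)f_j$, and iterating the Leibniz rule (noting $\deg\partial(g)\le\deg g+1$) gives $\partial^m f_j=\rho_{j,m}\,f_j$ with $\rho_{j,m}\in\KK[X]$ of degree at most $m$. Second, $\val(\partial h)\ge\val(h)$ for every power series $h$, since $\partial(X^v)=vX^v(1+X)$. Now form $\Delta:=\det(\partial^{\,i-1}g_j)_{1\le i,j\le k}$ with $g_1=P$ and $g_j=f_j$ for $j\ge 2$, i.e. the $\partial$-Wronskian of $P,f_2,\dots,f_k$. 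Multilinearity and antisymmetry give $\Delta=a_1\bigl(\prod_{j}f_j\bigr)D$, where $D:=\det(\rho_{j,i-1})_{i,j}$, so $\val(\Delta)=\sum_j\alpha_j+\val(D)$; since $\rho_{j,i-1}$ has degree at most $i-1$ we get $\val(D)\le\deg D\le\sum_{i}(i-1)=\binom{k}{2}$. On the other hand, expand $\Delta$ along its first column, whose entries are $P,\partial P,\dots,\partial^{k-1}P$: each $(k-1)\times(k-1)$ minor has its $j$-th column made of $\partial$-derivatives of $f_j$, hence is divisible by $\prod_{j\ge 2}f_j$ and has valuation at least $\sum_{j\ge 2}\alpha_j$; combined with $\val(\partial^{\,i-1}P)\ge\val(P)$ this yields $\val(\Delta)\ge\val(P)+\sum_{j\ge 2}\alpha_j$. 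Comparing the two expressions, $\val(P)\le\alpha_1+\val(D)\le\alpha_1+\binom{k}{2}$, which is the $j=1$ term of the asserted maximum.

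The delicate points are the bookkeeping in the two reductions — in particular the inequality $j_l-l\le k-d$ that keeps the linearly independent subfamily within the target bound — and the observation that $P$ must replace the term $f_1$ of smallest valuation (equivalently, one needs $a_1\ne 0$, which is precisely what the first reduction secures): only then is every minor in the expansion of $\Delta$ divisible by $\prod_{j\ge 2}f_j$, giving valuation at least $\sum_{j\ge 2}\alpha_j$, whereas replacing any other $f_j$ would yield only a sum missing a larger $\alpha$ and hence too weak a bound. One should also confirm that the characteristic-zero Wronskian criterion is genuinely used (it is, to ensure $D\ne 0$ so that $\val(D)\le\deg D$ is meaningful), and dispatch the base case $k=1$, where $\val(P)=\alpha_1$.
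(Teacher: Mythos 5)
Your proof is correct, but it takes a genuinely different route from the paper's. The paper works with the ordinary Wronskian and splits the argument into two lemmas: a lower bound $\val(\wronskian)\ge\sum_j\val(f_j)-\binom{k}{2}$ coming from $\val(f_j^{(i)})\ge\val(f_j)-i$, and an upper bound $\val(\wronskian)\le\sum_j\alpha_j$ obtained by factoring $X^{\alpha_j-k}(1+X)^{\beta_j-k}$ out of each column and $X^{k-i}(1+X)^{k-i}$ out of each row --- which is why it must first reduce to the case $\alpha_j,\beta_j\ge k$ by multiplying $P$ by $X^k(1+X)^k$. Your twisted derivation $\partial=X(1+X)\frac{d}{dX}$ makes each $f_j$ satisfy $\partial^m f_j=\rho_{j,m}f_j$ with $\deg\rho_{j,m}\le m$ exactly, with no hypothesis on the exponents; this merges the paper's two lemmas into the single identity $\Delta=a_1\bigl(\prod_j f_j\bigr)D$ and dispenses with the normalization step entirely, which is a real simplification. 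Your handling of linear dependence is essentially the paper's: the paper applies its independent-case bound (which involves only $\alpha_{j_1}$) and uses only $j_1+d-1\le j_d\le k$, i.e.\ your inequality for $l=1$, whereas your inductive formulation needs $j_l-l\le k-d$ for every $l$, which your greedy-scan count correctly supplies. The deletion-of-zero-coefficients step, needed to secure $a_1\ne0$, is implicit in the paper (where $a_1$ is tacitly nonzero) and you are right to make it explicit.

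The one step you assert without justification is that $D\ne0$, equivalently that the $\partial$-Wronskian of linearly independent $f_j$'s is nonzero; Proposition~\ref{prop:wronskian} as stated applies to $\frac{d}{dX}$, not to $\partial$. The fix is one line: writing $u=X(1+X)$, an easy induction gives $\partial^m f=u^m f^{(m)}+\sum_{l<m}c_{m,l}f^{(l)}$ with $c_{m,l}\in\KK[X]$, so the matrix $(\partial^{\,i-1}f_j)_{i,j}$ is obtained from the ordinary Wronskian matrix by adding to each row a combination of earlier rows and scaling row $i$ by $u^{i-1}$; hence $\det(\partial^{\,i-1}f_j)=u^{\binom{k}{2}}\,\wronskian(f_1,\dotsc,f_k)\ne0$, and therefore $D\ne0$. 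With that line added the argument is complete.
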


A lower bound for the valuation of $P$ is clearly $\alpha_1$ (and it is attained when $\alpha_2>\alpha_1$ for instance). If the family $(X^{\alpha_j}(1+X)^{\beta_j})_{1\le j\le k}$ is linearly independent over $\KK$, the upper bound we get is actually $\alpha_1+\binom{k}{2}$: At most the first $\binom{k}{2}$ lowest-degree monomials can be cancelled. 
If $\alpha_j=\alpha_1$ for all $j$, Haj\'os' Lemma~\cite{Hajos,MS77} gives the better bound $\alpha_1+(k-1)$. (This bound can be shown to be tight by expanding ${X^{k-1}=(-1+(X+1))^{k-1}}$ with the binomial formula.) 
This is not true anymore when the $\alpha_j$'s are not all equal. One can show that the valuation can be as large as $\alpha_1+(2k-3)$ (see Proposition~\ref{prop:LowerBound}). 
The exact bound remains unknown, and whether this bound is still linear as in Haj\'os' Lemma or quadratic is open.

Our proof of Theorem~\ref{thm:val} is based on the so-called \emph{Wronskian} of a family of polynomials. This is a classical tool for the study of differential equations but it has recently been used to bound the valuation of a sum of square roots of polynomials~\cite{KS11} and also to bound the number of real roots of some sparse-like polynomials~\cite{KPT12}.

\begin{definition}
Let $f_1,\dotsc,f_k\in\KK[X]$. Their \emph{Wronskian} is the determinant of the \emph{Wronskian matrix}
\[\wronskian(f_1,\dotsc,f_k)=\det\begin{bmatrix}
    f_1         & f_2           & \dotsb & f_k     \\
    f_1'        & f_2'          & \dotsb & f_k'    \\
    \vdots      & \vdots        &        & \vdots  \\
    f_1^{(k-1)} & f_2^{(k-1)}   & \dotsb & f_k^{(k-1)} 
\end{bmatrix}.\]
\end{definition}

The main property of the Wronskian is its relation to linear independence. 
The following result is classical (see~\cite{BoDu10} for a simple proof of this fact).

\begin{proposition}\label{prop:wronskian}
The Wronskian of $f_1,\dotsc,f_k$ is nonzero if and only if the $f_j$'s are linearly independent over $\KK$.
\end{proposition}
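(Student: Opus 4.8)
The plan is to treat the two implications separately: ``linearly dependent $\Rightarrow$ Wronskian zero'' is a one-line computation, while the converse is an induction on $k$ carried out over the rational function field $\KK(X)$.

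For the first implication, suppose $\sum_{j=1}^k c_j f_j = 0$ with $c_1,\dots,c_k\in\KK$ not all zero. Differentiating this identity $i$ times gives $\sum_{j=1}^k c_j f_j^{(i)}=0$ for every $i\ge 0$, so the columns of the Wronskian matrix are linearly dependent over $\KK$ and hence $\wronskian(f_1,\dots,f_k)=0$. (Nothing about the characteristic is used here.)

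For the converse I would induct on $k$; the case $k=1$ is immediate since $\wronskian(f_1)=f_1$. Suppose $\wronskian(f_1,\dots,f_k)=0$. If the sub-Wronskian $W_{k-1}:=\wronskian(f_1,\dots,f_{k-1})$ also vanishes, the induction hypothesis yields a $\KK$-linear dependence among $f_1,\dots,f_{k-1}$, hence among all the $f_j$. So assume $W_{k-1}\ne 0$, and let $v_j=(f_j,f_j',\dots,f_j^{(k-1)})\in\KK(X)^k$ be the $j$-th column of the Wronskian matrix. Because the $(k-1)\times(k-1)$ minor $W_{k-1}$ is nonzero, $v_1,\dots,v_{k-1}$ are linearly independent over $\KK(X)$; because the full $k\times k$ determinant vanishes, $v_k$ lies in their span, so $v_k=\sum_{j=1}^{k-1}r_j v_j$ for uniquely determined $r_j\in\KK(X)$.

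The heart of the argument, and the step I expect to be the main obstacle, is to show that every $r_j$ is a constant. Comparing coordinates $0,\dots,k-1$ in $v_k=\sum_j r_j v_j$ gives $f_k^{(i)}=\sum_{j=1}^{k-1} r_j f_j^{(i)}$ for $0\le i\le k-1$. Differentiating the identity for index $i$ (with $0\le i\le k-2$) and subtracting the identity for index $i+1$ leaves $\sum_{j=1}^{k-1} r_j' f_j^{(i)}=0$ for $0\le i\le k-2$; that is, $(r_1',\dots,r_{k-1}')$ is annihilated by the Wronskian matrix of $f_1,\dots,f_{k-1}$, whose determinant $W_{k-1}$ is nonzero, so $r_j'=0$ for all $j$. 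At this point the characteristic-zero hypothesis of the section is essential: a rational function over a field of characteristic zero with zero derivative is constant (a short argument via lowest terms and degrees). Hence $r_j\in\KK$, and the index-$0$ identity $f_k=\sum_{j=1}^{k-1}r_j f_j$ is the sought dependence. It is worth noting that the statement fails in positive characteristic — $\wronskian(1,X^p)=0$ while $1,X^p$ are linearly independent — which is precisely why the passage from a relation over $\KK(X)$ to one with constant coefficients cannot go through there.
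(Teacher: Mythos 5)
Your proof is correct. Note that the paper does not actually prove Proposition~\ref{prop:wronskian}: it records it as a classical fact and refers to~\cite{BoDu10} for a simple proof, so there is no in-paper argument to compare against line by line. What you give is the standard differential-algebra proof: the easy direction by differentiating a $\KK$-linear relation, and the converse by induction on $k$, working over $\KK(X)$. All the steps check out. The nonvanishing minor $W_{k-1}$ does double duty exactly as you use it: it makes $v_1,\dotsc,v_{k-1}$ independent over $\KK(X)$ (so $v_k=\sum_j r_j v_j$ with $r_j\in\KK(X)$), and it forces $r_j'=0$ from the $(k-1)\times(k-1)$ system obtained by differentiating and subtracting. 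You also correctly isolate the single point where characteristic zero is indispensable, namely that $r_j'=0$ implies $r_j\in\KK$; your counterexample $\wronskian(1,X^p)=0$ is the right illustration and is consistent with the paper's need for the weaker Proposition~\ref{prop:positive} (independence over $\KK[X^p]$) in Section~\ref{section:posChar}. The cited proof of Bostan and Dumas is different in style (a direct, non-inductive argument tailored to power series rather than your induction over $\KK(X)$), but nothing in the paper depends on which proof of the proposition is used, so your self-contained argument serves the same purpose.
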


The next two lemmas are our main ingredients to give a bound on the valuation for $P$, using a bound on the valuation of some Wronskian.

\begin{lemma}\label{lemma:valinf}
Let $f_1,\dotsc,f_k\in\KK[X]$. Then
\[\val(\wronskian(f_1,\dotsc,f_k))\ge \sum_{j=1}^k \val(f_j)-\binom{k}{2}.\]
\end{lemma}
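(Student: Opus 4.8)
The plan is to prove the Wronskian valuation bound by a direct manipulation of the Wronskian matrix using the product rule for differentiation. The key observation is that if $f_j = X^{v_j} g_j$ with $g_j(0) \neq 0$ (so $v_j = \val(f_j)$), then each derivative $f_j^{(i)}$ is divisible by $X^{v_j - i}$ whenever $i \leq v_j$, and in any case by $X^{\max(v_j - i, 0)}$. Actually, more carefully, I would write $f_j^{(i)} = X^{v_j - i} h_{i,j}$ for some polynomial $h_{i,j}$ (a Laurent polynomial a priori, but genuinely a polynomial once we track things correctly, or we can just work with the bound $X^{v_j-i} \mid f_j^{(i)}$ interpreted as: $X$ to the $(v_j-i)$ power divides, which holds for all $i$ if we allow that $X^{\text{negative}}$ divides everything).

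**First I would** pull out the common factor $X^{v_j - (i)}$ — wait, the issue is that the power depends on both the row $i$ and the column $j$. The cleaner route: factor $X^{v_j}$ out of column $j$ is wrong because the derivatives lower the valuation. Instead, observe that row $i$ (the $(i-1)$-th derivative row, $i = 1, \dots, k$) has entries $f_j^{(i-1)}$, each divisible by $X^{v_j - (i-1)}$. So entry $(i,j)$ of the Wronskian matrix is divisible by $X^{v_j - i + 1}$. The valuation of the determinant is at least the minimum over permutations $\sigma$ of $\sum_{i=1}^k (v_{\sigma(i)} - i + 1) = \sum_j v_j - \sum_{i=1}^k (i-1) = \sum_j v_j - \binom{k}{2}$, since $\sum_{i=1}^{k}(i-1) = \binom{k}{2}$ and the sum $\sum_j v_{\sigma(j)} = \sum_j v_j$ is permutation-independent. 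That gives exactly the claimed bound.

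**The one subtlety to nail down** is the divisibility claim $X^{v_j - i + 1} \mid f_j^{(i-1)}$: if $f_j = \sum_{\ell \geq v_j} c_\ell X^\ell$, then $f_j^{(i-1)} = \sum_{\ell \geq v_j} c_\ell \ell(\ell-1)\cdots(\ell - i + 2) X^{\ell - i + 1}$, and every exponent appearing is $\geq v_j - i + 1$, so the claim holds (when $v_j - i + 1 \leq 0$ it is vacuous, meaning no constraint — $X^0 = 1$ divides everything, and negative powers trivially "divide" in the sense that the valuation contribution is at least that negative number, which only weakens the bound harmlessly; one should phrase it as $\val(f_j^{(i-1)}) \geq v_j - i + 1$). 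Here I use characteristic zero only implicitly — actually the falling factorial coefficients could vanish, but that only \emph{raises} the valuation, so the inequality is still fine; the real use of characteristic zero is elsewhere in the paper, not in this lemma. Then I conclude: $\val(\det) \geq \min_\sigma \sum_i \val(\text{entry } (i,\sigma(i))) \geq \min_\sigma \sum_i (v_{\sigma(i)} - i + 1) = \sum_j v_j - \binom{k}{2}$.

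**I expect the main obstacle to be purely expository** rather than mathematical: stating the divisibility/valuation bound for the $(i,j)$ entry cleanly (handling the possibly-vacuous case $i - 1 > v_j$ without clutter) and then invoking the standard fact that the valuation of a determinant is at least the minimum, over all transversals, of the sum of the valuations of the entries along the transversal. This last fact follows immediately from the Leibniz expansion $\det = \sum_\sigma \pm \prod_i (\text{entry } (i,\sigma(i)))$ together with $\val(\sum) \geq \min \val$ and $\val(\prod) = \sum \val$. No hard estimate is needed; the whole proof is a two-line computation once the setup is in place.
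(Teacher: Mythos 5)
Your proof is correct and is essentially the paper's own argument: expand the determinant over permutations, use $\val(f_j^{(i)})\ge\val(f_j)-i$ for each entry, and observe that each product then has valuation at least $\sum_j\val(f_j)-\binom{k}{2}$. The extra care you take with the possibly-vacuous case $i-1>\val(f_j)$ and with vanishing falling-factorial coefficients is harmless and does not change the route.
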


\begin{proof}
Each term of the determinant is a product of $k$ terms, one from each column and one from each row. The valuation of such a term is at least $\sum_j\val(f_j)-\sum_{i=1}^{k-1} i$ since for all $i,j$, $\val(f_j^{(i)})\ge\val(f_j)-i$. The result follows.
\end{proof}

We can slightly refine the bound in this lemma. The term of valuation $\sum_j\val(f_j)-\binom{k}{2}$ in the Wronskian is indeed the determinant of the matrix made of the smallest degree monomials of each $f_j^{(i)}$. This determinant can vanish. In fact, one can easily see that this is the case if two $f_j$'s have the same valuation since this yields two proportional columns in the matrix. To use this idea more generally, consider that the $f_j$'s are ordered by increasing valuation. We define a \emph{plateau} to be a set $\{f_{j_0},\dots,f_{j_0+s}\}$ such that for $0<t\le s$, $\val(f_{j_0+t})\le\val(f_{j_0})+t-1$. The $f_j$'s are naturally partitioned into plateaux. Suppose that there are $(m+1)$ plateaux, of length $p_0$, \dots, $p_m$ respectively, and let $f_{j_0}$, \dots, $f_{j_m}$ their respective first elements. Generalizing the previous remark to plateaux, it can be shown that 
\begin{equation}\label{eq:refinement}
\val(\wronskian(f_1,\dotsc,f_k))\ge\sum_{i=0}^m \left(
p_i 
\val(f_{j_i})+\binom{p_i}{2}\right)-\binom{k}{2}.
\end{equation}
This bound is at least as large as in the lemma. If all the $f_j$'s have a different valuation, then the bound is equal to the bound stated in the lemma since there are in this case $k$ plateaux, each of length $1$. On the other side, if they all have the same valuation $\alpha$, there is one plateau of length $k$ and the bound is $\val(\wronskian(f_1,\dotsc,f_k))\ge k\alpha$. We investigate the implications of this refinement after the proof of Theorem~\ref{thm:val}.

\begin{lemma}\label{lemma:valsup}
Let $f_j=X^{\alpha_j}(1+X)^{\beta_j}$, $1\le j\le k$, such that $\alpha_j,\beta_j\ge k$ for all $j$. If the $f_j$'s are linearly independent, then
\[\val(\wronskian(f_1,\dotsc,f_k))\le \sum_{j=1}^k\alpha_j.\]
\end{lemma}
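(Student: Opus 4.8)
The plan is to compute the Wronskian $\wronskian(f_1,\dots,f_k)$ of the functions $f_j = X^{\alpha_j}(1+X)^{\beta_j}$ explicitly enough to read off its valuation. The key observation is that each $f_j$ has a very special form: its logarithmic derivative is $f_j'/f_j = \alpha_j/X + \beta_j/(1+X)$, a rational function with a simple pole at $0$ and at $-1$. Iterating, every derivative $f_j^{(i)}$ is $f_j$ times a rational function whose denominator is a power of $X(1+X)$. So I would factor $f_j^{(i)} = X^{\alpha_j - i}(1+X)^{\beta_j - i} g_{i,j}(X)$ for suitable polynomials $g_{i,j}$, legitimately since $\alpha_j,\beta_j \ge k > i$. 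Pulling the common factors $X^{\alpha_j - (k-1)}$ out of column $j$ and $(1+X)^{\beta_j - (k-1)}$ likewise — actually it is cleaner to pull $X^{\alpha_j-i}(1+X)^{\beta_j-i}$ is row-dependent, so instead pull the minimal powers — one factors out $\prod_j X^{\alpha_j-(k-1)}(1+X)^{\beta_j-(k-1)}$ from the whole determinant (the power $k-1$ being the largest derivative order). What remains is a determinant $D(X)$ of a matrix whose entries are polynomials $X^{i}(1+X)^{i} \cdot (\text{stuff})$... I need to be a bit more careful, but the upshot is

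\[
\wronskian(f_1,\dots,f_k) = X^{\sum_j \alpha_j - \binom{k}{2}}\,(1+X)^{\sum_j \beta_j - \binom{k}{2}}\cdot D(X),
\]

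where $D$ is a polynomial. Then $\val(\wronskian) = \sum_j\alpha_j - \binom{k}{2} + \val(D)$, and to get the claimed bound $\val(\wronskian)\le \sum_j\alpha_j$ it suffices to show $\val(D)\le \binom{k}{2}$, i.e. that $D$ has a nonzero coefficient in some degree $\le\binom{k}{2}$.

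**Identifying the main obstacle and how to handle it.**

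The hard part will be controlling $D$ well enough to bound $\val(D)$ — in particular $D$ could a priori be the zero polynomial, which is exactly the case ruled out by the linear independence hypothesis via Proposition~\ref{prop:wronskian} (since $\wronskian\ne 0 \iff D\ne 0$). So $D\ne 0$; the task is to bound the valuation of this \emph{nonzero} polynomial. For this I would evaluate the Wronskian near $X=0$ by a change of variables, or equivalently exploit the symmetry $X\leftrightarrow 1+X$ (swapping the roles of $0$ and $-1$). Concretely, substituting $X = -1 + Y$ turns $f_j$ into $(-1+Y)^{\alpha_j}Y^{\beta_j}$, and the Wronskian is (up to sign, by the chain rule the Wronskian is invariant under affine substitutions) $\wronskian$ of these functions in $Y$; running the same factoring argument in the variable $Y$ pulls out $Y^{\sum_j\beta_j - \binom{k}{2}}$ times a polynomial in $Y$. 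Matching the two factorizations, one learns that $D(X)$, expanded at $X=0$, is divisible by at most $(1+X)^{\text{something}}$ but — more to the point — one gets that the total degree of $D$ plus its valuation at both $0$ and $-1$ is tightly constrained by the degree of the Wronskian.

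**The degree count.**

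The cleanest route is a degree argument: each $f_j^{(i)}$ has degree at most $\alpha_j+\beta_j-i$, so $\wronskian(f_1,\dots,f_k)$ has degree at most $\sum_j(\alpha_j+\beta_j) - \binom{k}{2}$. Combined with the factorization above, $D$ has degree at most $\sum_j(\alpha_j+\beta_j) - \binom{k}{2} - (\sum_j\alpha_j - \binom{k}{2}) - (\sum_j\beta_j - \binom{k}{2}) = \binom{k}{2}$. Now applying the $X\leftrightarrow 1+X$ symmetry: the same reasoning in the variable $Y = 1+X$ shows $\wronskian = \pm\, X^{\sum\alpha_j-\binom{k}{2}}(1+X)^{\sum\beta_j-\binom{k}{2}}\tilde D(1+X)$ with $\deg\tilde D\le\binom{k}{2}$; but by uniqueness of the factorization into $X$-power, $(1+X)$-power, and the rest, $D(X) = \pm\,\tilde D(1+X)$ up to swapping which power is which — hence $D$ is (up to the affine substitution) equal to a polynomial of degree $\le\binom{k}{2}$, so $\val(D)\le \deg D\le \binom{k}{2}$ since $D\ne 0$. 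Therefore $\val(\wronskian) = \sum_j\alpha_j - \binom{k}{2} + \val(D) \le \sum_j\alpha_j$, as desired. The one delicate point I would double-check is the exact invariance of the Wronskian under the substitution $X\mapsto 1+X$ (it is, up to sign, since the Jacobian is $1$) and that the powers of $X$ and $1+X$ genuinely factor out of the determinant with exactly the exponents claimed — this is where the hypothesis $\alpha_j,\beta_j\ge k$ is used, to ensure no derivative ever produces a negative exponent or a spurious cancellation of the leading powers.
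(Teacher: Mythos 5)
Your proposal is correct and follows essentially the same route as the paper: factor $X^{\sum_j\alpha_j-\binom{k}{2}}(1+X)^{\sum_j\beta_j-\binom{k}{2}}$ out of the Wronskian matrix via column and row operations, note that the remaining determinant is a nonzero polynomial (by linear independence and Proposition~\ref{prop:wronskian}) of degree at most $\binom{k}{2}$, and bound its valuation by its degree. The $X\leftrightarrow 1+X$ symmetry digression is unnecessary --- once $D$ is a nonzero polynomial of degree at most $\binom{k}{2}$ you already have $\val(D)\le\binom{k}{2}$ --- but it does no harm.
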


\begin{proof}
By Leibniz rule, for all $i,j$
\begin{equation}\label{eq:leibniz} 
f_j^{(i)}(X)=\sum_{t=0}^i \binom{i}{t} (\alpha_j)_t(\beta_j)_{i-t} X^{\alpha_j-t}(1+X)^{\beta_j-i+t}
\end{equation}
where $(m)_n=m(m-1)\dotsb(m-n+1)$ is the falling factorial.
Since $\alpha_j-t\ge\alpha_j-i$ and $\beta_j-i+t\ge\beta_j-i$ for all $t$, 
\begin{equation*}
f_j^{(i)}(X)=X^{\alpha_j-i}(1+X)^{\beta_j-i}
        \times \sum_{t=0}^i\binom{i}{t}(\alpha_j)_t(\beta_j)_{i-t} X^{i-1}(1+X)^t.
\end{equation*}
Furthermore, since $\alpha_j\ge k\ge i$, we can write $X^{\alpha_j-i}=X^{\alpha_j-k}X^{k-i}$ and since $\beta_j\ge k\ge i$, $(1+X)^{\beta_j-i}=(1+X)^{\beta_j-k}(1+X)^{k-i}$. Thus, $X^{\alpha_j-k}(1+X)^{\beta_j-k}$ is a common factor of the entries of the $j$-th column of the Wronskian matrix, and $X^{k-i}(1+X)^{k-i}$ is a common factor of the entries of the $i$-th row. Together, we get
\[\wronskian(f_1,\dotsc,f_k)=X^{\sum_j\alpha_j-\binom{k}{2}}(1+X)^{\sum_j\beta_j-\binom{k}{2}}\det(M)\]
where the matrix $M$ is defined by
\[M_{i,j}= \sum_{t=0}^i \binom{i}{t} (\alpha_j)_t(\beta_j)_{i-t} X^{i-t}(1+X)^t.\]
The polynomial $\det(M)$ is nonzero since the $f_j$'s are supposed linearly independent and its degree is at most $\binom{k}{2}$. Therefore its valuation cannot be larger than its degree and is bounded by $\binom{k}{2}$. 

Altogether, the valuation of the Wronskian is bounded by $\sum_j\alpha_j-\binom{k}{2}+\binom{k}{2}=\sum_j\alpha_j$.
\end{proof}

\begin{proof}[Proof of Theorem~\protect\ref{thm:val}]
Let $P=\sum_j a_jX^{\alpha_j}(1+X)^{\beta_j}$, and let $f_j=X^{\alpha_j}(1+X)^{\beta_j}$. 
We assume first that $\alpha_j,\beta_j\ge k$ for all $j$, and that the $f_j$'s are linearly independent. Note that $\val(f_j)=\alpha_j$ for all $j$.

Let $\wronskian$ denote the Wronskian of the $f_j$'s. We can replace $f_1$ by $P$ in the first column of the Wronskian matrix 
using column operations which multiply the determinant by $a_1$ (its valuation does not change). The matrix we obtain is the Wronskian matrix of $P,f_2,\dotsc,f_k$. Now using Lemma~\ref{lemma:valinf}, we get
\[\val(\wronskian)\ge \val(P)+\sum_{j\ge 2} \alpha_j-\binom{k}{2}.\]
This inequality combined with Lemma~\ref{lemma:valsup} shows that
\begin{equation}\label{eq:valLinIndep} 
\val(P)\le \alpha_1+\binom{k}{2}.
\end{equation}

We now aim to remove our two previous assumptions. If the $f_j$'s are not linearly independent, we can extract from this family a basis $f_{j_1}, \dots, f_{j_d}$. Then $P$ can be expressed in this basis as $P=\sum_{l=1}^d \tilde{a_l} f_{j_l}$. We can apply Equation~\eqref{eq:valLinIndep} 
to $f_{j_1}$,\dots, $f_{j_d}$ and obtain $\val(P)\le \alpha_{j_1} + \binom{d}{2}$. 
Since $j_d\le k$, we have $j_1+d-1\le k$ and $\val(P)\le\alpha_{j_1}+\binom{k+1-j_1}{2}$. The value of $j_1$ being unknown, we conclude that 
\begin{equation}\label{eq:valMax} 
\val(P)\le\max_{1\le j\le k}\left(\alpha_j+\binom{k+1-j}{2}\right).
\end{equation}

The second assumption is that $\alpha_j,\beta_j\ge k$. Given $P$, consider $\tilde P=X^k (1+X)^k P=\sum_j a_j X^{\tilde{\alpha_j}}(1+X)^{\tilde{\beta_j}}$. Then $\tilde P$ satisfies $\tilde{\alpha_j},\tilde{\beta_j}\ge k$, whence by Equation~\eqref{eq:valMax}, $\val(\tilde P)\le \max_j(\tilde{\alpha_j}+\binom{k+1-j}{2})$. Since $\val(\tilde P)=\val(P)+k$ and $\tilde{\alpha_j}=\alpha_j+k$, the result follows.
\end{proof}

\begin{remark}
In Theorem~\ref{thm:val}, we can replace $(1+X)$ by $(uX+v)$ for any $u,v\neq0$. Indeed, we can write $uX+v=v(\frac{u}{v} X+1)$ and then use the change of variables $Y=\frac{u}{v} X$. This gives us a polynomial of the same form as in the theorem, with the same valuation as the original one.
\end{remark}

\begin{remark}
Theorem~\ref{thm:val} does not hold in positive characteristic as shown by the equality $(1+X)^{2^n}+(1+X)^{2^{n+1}}=X^{2^n}(1+X) \mod 2$. Section~\ref{section:posChar} investigates the case of positive characteristic in more details.
\end{remark}

We argued after Lemma~\ref{lemma:valinf} that it can be refined. In the previous proof, it is used 
with $P$, $f_2$, \dots, $f_k$. If all the $f_j$'s have the same valuation $\alpha$, Equation~\eqref{eq:refinement} gives the bound $\val(\wronskian)\ge\val(P)+((k-1)\alpha+\binom{k-1}{2})-\binom{k}{2}$, whence $\val(P)\le\alpha+(k-1)$. In this case, replacing Lemma~\ref{lemma:valinf} by Equation~\eqref{eq:refinement} gives us a new proof of Haj\'os' Lemma, with the correct bound.

On the other hand, if the $f_j$'s have pairwise distinct valuations, Equation~\ref{eq:refinement} gives the same bound as Lemma~\ref{lemma:valinf}. Yet in this case 
Lemma~\ref{lemma:valsup} can be refined to obtain the bound $\val(\wronskian)\le\sum_j\alpha_j -\binom{k}{2}$. Again, we find the optimal bound for the valuation, that is $\val(P)=\alpha_1$ here. 

The refinement of Lemma~\ref{lemma:valinf} alone is not sufficient to improve Theorem~\ref{thm:val} in the general case. To this end, one needs to 
improve Lemma~\ref{lemma:valsup} as well. As already mentioned, it is an open problem to determine the best achievable bound for Theorem~\ref{thm:val}. 
The next proposition shows that it cannot be as low as in Haj\'os' Lemma.

\begin{proposition}\label{prop:LowerBound}
For $k\ge 3$, there exists a linearly independent family of polynomials $(X^{\alpha_j}(1+X)^{\beta_j})_{1\le j\le k}$, $\alpha_1\le\dotsb\le\alpha_k$ and a family of rational coefficients $(a_j)_{1\le j\le k}$ such that the polynomial 
\[P(X)=\sum_{j=1}^k a_j X^{\alpha_j} (1+X)^{\beta_j}\]
is nonzero and has valuation $\alpha_1+(2k-3)$.
\end{proposition}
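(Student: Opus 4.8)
The plan is to produce, for each $k\ge 3$, an explicit family achieving valuation $\alpha_1+(2k-3)$. Since the valuation is translation-invariant under multiplication by $X^{\alpha_1}$, I would normalize $\alpha_1=0$ and aim for valuation exactly $2k-3$. The natural starting point is Haj\'os' construction: the identity $X^{k-1}=(-1+(1+X))^{k-1}=\sum_{i=0}^{k-1}\binom{k-1}{i}(-1)^{k-1-i}(1+X)^i$ shows that the family $\big((1+X)^i\big)_{0\le i\le k-1}$ (all with $\alpha_j=0$) admits a combination of valuation $k-1$. To push the valuation up to $2k-3$, the idea is to "shift" some of the basis elements by powers of $X$: replace $(1+X)^i$ by $X^{e_i}(1+X)^{\beta_i}$ for a carefully chosen increasing sequence of exponents $e_i$, using the extra freedom in the $\alpha_j$'s that is unavailable in Haj\'os' setting.

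Concretely, I would first handle the base case $k=3$ by hand: find $\alpha_1\le\alpha_2\le\alpha_3$, $\beta_1,\beta_2,\beta_3$ and rationals $a_1,a_2,a_3$ with $P=\sum_j a_j X^{\alpha_j}(1+X)^{\beta_j}$ linearly independent as a family, nonzero, and $\val(P)=\alpha_1+3$. A search over small exponents should yield something like $P = (1+X)^4 - 2X(1+X)^2 + X^2$ or a similar three-term combination; one checks directly that the low-order terms up to degree $2$ cancel and the degree-$3$ coefficient does not, and that the three polynomials $X^{\alpha_j}(1+X)^{\beta_j}$ are linearly independent (which is automatic here since they have distinct degrees $\deg$, or can be verified by a $3\times 3$ Wronskian being nonzero). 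Then I would give the general construction by induction or directly: take $f_j = X^{2(j-1)}(1+X)^{k+1-j}$ for $1\le j\le k$ (so $\alpha_1=0<\alpha_2<\dots$, and the family is linearly independent since the $f_j$ have pairwise distinct valuations $0,2,4,\dots,2k-2$), and choose the $a_j$ to cancel the coefficients of $1,X,\dots,X^{2k-4}$ in $P=\sum_j a_j f_j$; since there are $k$ free coefficients and $2k-3$ linear conditions, this is not immediately solvable by dimension count, so the actual construction will need to be more clever than a generic shift.

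The main obstacle is exactly this dimension mismatch: naively one has $k$ coefficients but wants to kill $2k-3$ low-degree monomials, which is impossible for a generic choice of exponents $\beta_j$. The real content of the proof is choosing the exponents $\alpha_j,\beta_j$ so that the $2k-3$ cancellation equations are highly degenerate and actually have a common nonzero solution. I expect this works by exploiting binomial identities: picking the $\beta_j$ in an arithmetic-progression-like pattern so that successive polynomials $X^{\alpha_j}(1+X)^{\beta_j}$ share many Taylor coefficients at $X=0$, making the linear system rank $k-1$ rather than $\min(k,2k-3)$. I would make this precise by writing the matrix of low-degree coefficients and showing, via row reduction using Pascal's rule $\binom{n}{i}=\binom{n-1}{i}+\binom{n-1}{i-1}$, that its rank drops to $k-1$, so that a nonzero kernel vector $(a_j)$ exists; verifying that for this kernel vector the degree-$(2k-3)$ coefficient of $P$ is nonzero (so $\val(P)$ is exactly, not more than, $\alpha_1+2k-3$) is then a separate short computation. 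Finally I would confirm linear independence of the chosen family, which is immediate once the $f_j$ are arranged to have distinct valuations.
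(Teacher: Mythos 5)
Your proposal correctly reduces to $\alpha_1=0$ and correctly isolates the real difficulty: with $k$ coefficients one must kill $2k-3$ low-order monomials, so the construction must make an overdetermined linear system degenerate. But the proof stops exactly there --- ``I expect this works by exploiting binomial identities'' is the entire content of the proposition, and you never exhibit a family for which the rank actually drops. Worse, the concrete family you put forward, $f_j=X^{2(j-1)}(1+X)^{k+1-j}$ with pairwise distinct valuations $0,2,\dots,2k-2$, provably cannot work: if the valuations are pairwise distinct, the lowest-valuation term whose coefficient $a_j$ is nonzero contributes a monomial $a_jX^{\alpha_j}$ that nothing else can cancel, so $\val(P)=\alpha_{j_0}$ for the first index with $a_{j_0}\neq 0$ (the paper makes this very point in the discussion following the proof of Theorem~\ref{thm:val}: distinct valuations force $\val(P)=\alpha_1$). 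Any family achieving $\alpha_1+(2k-3)$ must therefore have \emph{repeated} valuations, which is precisely the feature your plan avoids in order to get linear independence ``for free.'' Your $k=3$ candidate is also incorrect as written: $(1+X)^4-2X(1+X)^2+X^2=1+2X+3X^2+2X^3+X^4$ has valuation $0$, not $3$.

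For comparison, the paper's proof is a fully explicit construction with two terms of valuation $0$: it shows that
\[
-1+(1+X)^{2k+3}-\sum_{j=0}^{k}\frac{2k+3}{2j+1}\binom{k+1+j}{k+1-j}\,X^{2j+1}(1+X)^{k+1-j}=X^{2k+3},
\]
a $(k+3)$-term combination of valuation $2(k+3)-3$, and verifies the required binomial identity by a Wilf--Zeilberger certificate. Finding such an identity (or any equivalent degenerate system) is the heart of the matter; without it, your argument is a plausible search strategy rather than a proof.
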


\begin{proof}
A polynomial that achieves this bound is
\[P_k(X)=-1+(1+X)^{2k+3}-\sum_{j=0}^k a_j 
    X^{2j+1}(1+X)^{k+1-j},\]
where 
\[a_j=\frac{2k+3}{2j+1}\binom{k+1+j}{k+1-j}.\]
We aim to prove that $P_k(X)=X^{2k+3}$. Since it has $(k+3)$ terms and $\alpha_1=0$, this proves the proposition. To prove the result for an arbitrary value of $\alpha_1$, it is sufficient to multiply $P_k$ by some power of $X$.

It is clear that $P_k$ has degree $(2k+3)$ and is monic.
Let $[X^m]P_k$ be the coefficient of the monomial $X^m$ in $P_k$. Then for $m>0$
\[[X^m]P_k = \binom{2k+3}{m} - \sum_{j=0}^k 
                                    a_j \binom{k+1-j}{m-2j-1}.\]
We aim to prove that $[X^m]P_k=0$ as soon as $m<2k+3$. Using the definition of the $a_j$'s, this is equivalent to proving
\begin{equation}\label{eq:coeffPk}
\sum_{j=0}^k \frac{2k+3}{2j+1}\binom{k+1+j}{k+1-j}\binom{k+1-j}{m-2j-1} = \binom{2k+3}{m}.
\end{equation}

To prove this equality, we rely on Wilf and Zeilberger's algorithm~\cite{PetkovsekWilfZeilberger}, and its implementation in the Maple package \texttt{EKHAD} of Doron Zeilberger (see~\cite{PetkovsekWilfZeilberger} for more on this package). The program asserts the correctness of the equality and provides a recurrence relation satisfied by the summand that we can verify by hand. 

Let $F(m,j)$ be the summand in equation~\eqref{eq:coeffPk} 
divided by $\binom{2k+3}{m}$. We thus want to prove that $\sum_{j=0}^k F(m,j)=1$.
The \texttt{EKHAD} package provides 
\[R(m,j)=\frac{2j(2j+1)(k+j+2-m)}{(2k+3-m)(2j-m)}\]
and claims that
\begin{multline}\label{eq:WZ}
mF(m+1,j)-mF(m,j)\\=F(m,j+1)R(m,j+1)-F(m,j)R(m,j).
\end{multline}
In the rest of the proof, we show why this claim implies Equation~\eqref{eq:coeffPk}, and then that the claim holds.

Suppose first that Equation~\eqref{eq:WZ} holds and let us prove Equation~\eqref{eq:coeffPk}. If we sum Equation~\eqref{eq:WZ} for $j=0$ to $k$, we obtain
\begin{multline*}m(\sum_{j=0}^k F(m+1,j)-F(m,j))\\=F(m,k+1)R(m,k+1)-F(m,0)R(m,0).\end{multline*}
Since $R(m,0)=0$ and $F(m,k+1)=0$, $\sum_j F(m,j)$ is constant with respect to $m$. One can easily check that the sum is $1$ when $m=2k+2$. (Actually the only nonzero term in this case is for $j=k$.) Therefore, we deduce that for all $m<2k+3$,\footnote{The bound on $m$ is given by the fact that $R(m,j)$ is undefined for $m=2k+3$.} $\sum_j F(m,j)=1$, that is
Equation~\eqref{eq:coeffPk} is true.

To prove Equation~\eqref{eq:WZ}, note that
\[\frac{F(m+1,j)}{F(m,j)}= \frac{(j+k+2-m)(m+1)}{(m-2j)(2k+3-m)}\]
and 
\[\frac{F(m,j+1)}{F(m,j)}= \frac{(k+2-j)(m-2j-1)(m-2j-2)}{(2j+2)(2j+3)(j+k+3-m)}.\]
Therefore, to prove the equality, it is sufficient to check that
\begin{multline*}
0= m\frac{j+k+2-m}{m-2j} \frac{m+1}{2k+3-m}-m + R(m,j)\\
    - \frac{(k+2-j)(m-2j-1)(m-2j-2)}{(2j+2)(2j+3)(j+k+3-m)}R(m,j+1).
\end{multline*}
This is done by a mere computation.
\end{proof}

From Theorem~\ref{thm:val}, we can deduce the following Gap Theorem.

\begin{theorem}[Gap theorem] \label{thm:gap}
Let $P=\sum_{j=1}^ka_j X^{\alpha_j}(uX+v)^{\beta_j}$ 
with $u,v\neq 0$ and $\alpha_{j+1}\ge\alpha_j$, $0\le j<k$. Assume that there exists $\ell$ such that 
\begin{equation}\label{eq:gap} 
\alpha_{\ell+1}>\max_{1\le j\le\ell}\left(\alpha_j+\binom{\ell+1-j}{2}\right).
\end{equation}
Then $P$ is identically zero if and only if the polynomials $\sum_{j=1}^\ell a_j X^{\alpha_j}(uX+v)^{\beta_j}$ and $\sum_{j=\ell+1}^k a_j X^{\alpha_j}(uX+v)^{\beta_j}$ are both identically zero.

In particular, the smallest $\ell$ satisfying \eqref{eq:gap} is the smallest $\ell$ satisfying 
\[\alpha_{\ell+1}>\alpha_1+\binom{\ell}{2}.\]
\end{theorem}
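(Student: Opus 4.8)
The plan is to prove the Gap Theorem in two pieces. First I would reduce to the case $v=1$ (so the monomials are $X^{\alpha_j}(uX+1)^{\beta_j}$) exactly as in the remark following Theorem~\ref{thm:val}: write $uX+v=v(\tfrac{u}{v}X+1)$, absorb the constant $v^{\beta_j}$ into $a_j$, and do the change of variable $Y=\tfrac{u}{v}X$, which scales the whole polynomial by a power of $Y$ and therefore changes neither its being zero nor its valuation. Then a further remark lets me replace $(uX+1)$ by $(1+X)$ after another rescaling, so I may assume $P=\sum_j a_j X^{\alpha_j}(1+X)^{\beta_j}$.

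Next, the ``only if'' direction. Suppose $P\equiv 0$. Split $P=P_1+P_2$ where $P_1=\sum_{j=1}^\ell a_jX^{\alpha_j}(1+X)^{\beta_j}$ and $P_2=\sum_{j=\ell+1}^k a_jX^{\alpha_j}(1+X)^{\beta_j}$, so $P_1=-P_2$. The key observation is that $P_2$, if nonzero, is divisible by $X^{\alpha_{\ell+1}}$, since every exponent $\alpha_j$ with $j\ge \ell+1$ is at least $\alpha_{\ell+1}$; hence $\val(P_1)=\val(P_2)\ge \alpha_{\ell+1}$. On the other hand $P_1$ is a sum of $\ell$ terms of the required form with exponents $\alpha_1\le\dots\le\alpha_\ell$, so Theorem~\ref{thm:val} (applied with the index set $\{1,\dots,\ell\}$) gives that if $P_1\not\equiv 0$ then $\val(P_1)\le \max_{1\le j\le\ell}(\alpha_j+\binom{\ell+1-j}{2})$, which is strictly less than $\alpha_{\ell+1}$ by hypothesis~\eqref{eq:gap} — a contradiction. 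Therefore $P_1\equiv 0$, and then $P_2=P-P_1\equiv 0$ as well. The ``if'' direction is trivial: if both parts vanish so does their sum.

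Finally, the ``in particular'' clause, which is really just a statement about the combinatorics of the two inequalities. I would argue that for a fixed $\ell$, the condition $\alpha_{\ell+1}>\max_{1\le j\le\ell}(\alpha_j+\binom{\ell+1-j}{2})$ and the condition $\alpha_{\ell+1}>\alpha_1+\binom{\ell}{2}$ are in fact equivalent \emph{at the smallest such $\ell$}. One direction is immediate: taking $j=1$ in the max shows $\alpha_1+\binom{\ell}{2}$ is one of the terms, so \eqref{eq:gap} implies $\alpha_{\ell+1}>\alpha_1+\binom{\ell}{2}$. For the converse at the minimal $\ell$: let $\ell_0$ be minimal with $\alpha_{\ell_0+1}>\alpha_1+\binom{\ell_0}{2}$; I must show every term $\alpha_j+\binom{\ell_0+1-j}{2}$ with $1\le j\le\ell_0$ is $<\alpha_{\ell_0+1}$. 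For $j=1$ this is the hypothesis. For $j\ge 2$, minimality of $\ell_0$ applied to $\ell=j-1<\ell_0$ gives $\alpha_j=\alpha_{(j-1)+1}\le \alpha_1+\binom{j-1}{2}$, and then $\alpha_j+\binom{\ell_0+1-j}{2}\le \alpha_1+\binom{j-1}{2}+\binom{\ell_0+1-j}{2}$; the elementary inequality $\binom{a}{2}+\binom{b}{2}\le\binom{a+b}{2}$ for nonnegative integers $a,b$ (with $a=j-1$, $b=\ell_0+1-j$, $a+b=\ell_0$) bounds this by $\alpha_1+\binom{\ell_0}{2}<\alpha_{\ell_0+1}$, as needed. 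Hence the minimal $\ell$ for \eqref{eq:gap} equals the minimal $\ell$ for the simpler inequality.

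The main obstacle is essentially bookkeeping rather than depth: making sure Theorem~\ref{thm:val} is legitimately applicable to the truncation $P_1$ (the hypothesis $\alpha_1\le\dots\le\alpha_\ell$ is inherited, and the theorem's conclusion is stated exactly with the binomial $\binom{\ell+1-j}{2}$, so it matches), and handling the elementary combinatorial lemma $\binom{a}{2}+\binom{b}{2}\le\binom{a+b}{2}$ cleanly in the last part. Neither step is hard, but the ``in particular'' clause requires the minimality argument above rather than a pointwise equivalence of the two inequalities, which is the one subtlety worth spelling out carefully.
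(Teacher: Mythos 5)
Your proof is correct and follows essentially the same route as the paper: split $P$ at the gap, bound the valuation of the low part via Theorem~\ref{thm:val} and of the high part by $\alpha_{\ell+1}$, and use the superadditivity $\binom{a}{2}+\binom{b}{2}\le\binom{a+b}{2}$ for the ``in particular'' clause. The only immaterial difference is in that last clause: you show the minimal $\ell$ for the simpler inequality already satisfies \eqref{eq:gap}, while the paper shows no $j$ below the minimal $\ell$ for \eqref{eq:gap} satisfies the simpler inequality; either direction, combined with the trivial one, yields the equality of the two minima.
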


\begin{proof}
Let $Q=\sum_{j=1}^\ell a_j X^{\alpha_j}(uX+v)^{\beta_j}$ and $R=P-Q$. 
Suppose that $Q$ is not identically zero. By Theorem~\ref{thm:val}, its valuation is at most $\max_j(\alpha_j+\binom{\ell+1-j}{2})$. 
Since $\alpha_j\ge\alpha_{\ell+1}$ for $j>\ell$, the valuation of $R$ is at least $\alpha_{\ell+1}>\max_j(\alpha_j+\binom{\ell+1-j}{2})$. 
Therefore, if $Q$ is not identically zero, its monomial of lowest degree cannot be canceled by a monomial of $R$. In other words, $P=Q+R$ is not identically zero.

For the second part of the theorem, consider the smallest $\ell$ satisfying Equation~\eqref{eq:gap}. It is clear that $\alpha_{\ell+1}>\alpha_1+\binom{\ell}{2}$. Moreover for all $j\le\ell$, $\alpha_{j+1}\le\max_{i\le j}(\alpha_i+\binom{j+1-i}{2})$. We now prove by induction on $j$ that $\alpha_j\le\alpha_1+\binom{j-1}{2}$ for all $j\le\ell$. This is obviously true for $j=1$. Let $j<\ell$ and suppose that for all $i\le j$, $\alpha_i\le\alpha_1+\binom{i-1}{2}$. Then 
\[\alpha_{j+1}\le\max_{i<j}\left(\alpha_i+\binom{j+1-i}{2}\right)\le\alpha_1+\max_{i<j}\left(\binom{i-1}{2}+\binom{j-(i-1)}{2}\right).\]
To conclude, we remark that $\binom{i-1}{2}+\binom{j-(i-1)}{2}\le\binom{j}{2}$ for all $i<j$.
\end{proof}

It is straightforward to extend this theorem to more \emph{gaps}. The theorem can be recursively applied to $Q$ and $R$ (as defined in the proof). Then, if $P=P_1+\dotsc+P_s$ where there is a \emph{gap} between $P_t$ and $P_{t+1}$ for $1\le t<s$, then $P$ is identically zero if and only if each $P_t$ is zero.

\section{Algorithms} \label{sec:algo}

In this section, we prove that there exists a deterministic polynomial-time algorithm to test if a polynomial of the form
\begin{equation} \label{eq:linear}
P=\sum_{j=1}^k a_j X^{\alpha_j} (uX+v)^{\beta_j},
\end{equation}
is identically zero and give a deterministic polynomial-time algorithm to compute the linear factors of a lacunary bivariate polynomial. The size of $P$ is defined by 
\begin{equation}\label{eq:sizeP} 
\size(P)=\size(u)+\size(v)+\sum_{j=1}^k(\size(a_j)+\log(\alpha_j\beta_j)).
\end{equation}
The algorithms use Lenstra's algorithm~\cite{Len99} or a variant of it for treating some special cases. This use of Lenstra's algorithm implies some restrictions on the field $\KK$ in which the coefficients of the polynomials lie. In this section, $\KK$ is an algebraic number field, and it is represented as $\KK=\QQ[\xi]/\langle\varphi\rangle$ where $\varphi\in\ZZ[\xi]$ is a monic irreducible polynomial. Elements of $\KK$ are given as vectors in the basis $(1,\xi,\dotsc,\xi^{\deg\varphi-1})$. That is for $e\in\KK$, $e=(e_0,\dotsc,e_{\deg\varphi-1})$ with $e_t=n_t/d_t$ for each $t$ where $n_t,d_t\in\ZZ$. Then 
\[\size(e) = \log(n_1d_1) + \dotsb + \log(n_{\deg\varphi-1}d_{\deg\varphi-1}).\]
The size of a polynomial defined as above is then approximately the number of bits needed to write down its binary representation.

Theorems~\ref{thm:pit} and \ref{thm:factorization} were already proven in~\cite{KK05}. We give here new proofs based on our Gap Theorem. The structures of the algorithms we propose are the same as in~\cite{KK05}. The only differences are the ones induced by the use of a different Gap Theorem. This implies some differences in terms of the complexity that are discussed at the end of this section.

\begin{theorem}\label{thm:pit}
There exists a deterministic polynomial-time algorithm to decide if a polynomial of the form~\eqref{eq:linear} is identically zero.
\end{theorem}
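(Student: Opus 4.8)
The plan is to turn the Gap Theorem (Theorem~\ref{thm:gap}) into an algorithm by a recursive ``split at the gaps'' strategy, reducing the zero-test of a polynomial of the form~\eqref{eq:linear} with possibly enormous exponents to several zero-tests of polynomials whose exponents are polynomially bounded in $\size(P)$. First I would sort the terms so that $\alpha_1\le\dots\le\alpha_k$ (this costs $\mathrm{poly}(\size(P))$ time). Then I would scan the sorted exponents left to right, looking for the smallest index $\ell$ with $\alpha_{\ell+1}>\alpha_1+\binom{\ell}{2}$; by the ``in particular'' clause of Theorem~\ref{thm:gap} this is exactly the smallest $\ell$ satisfying the gap condition~\eqref{eq:gap}. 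If such an $\ell$ exists, split $P=Q+R$ with $Q=\sum_{j\le\ell}a_jX^{\alpha_j}(uX+v)^{\beta_j}$ and $R=\sum_{j>\ell}a_jX^{\alpha_j}(uX+v)^{\beta_j}$; by the Gap Theorem $P\equiv 0$ iff both $Q\equiv 0$ and $R\equiv 0$, so recurse on each part. Iterating (as the remark after Theorem~\ref{thm:gap} notes), $P$ is decomposed into blocks $P_1,\dots,P_s$ such that $P\equiv 0$ iff every $P_t\equiv 0$, and within each block $P_t$ with $k_t$ terms there is no gap, so $\alpha^{(t)}_{j}\le\alpha^{(t)}_1+\binom{j-1}{2}\le\alpha^{(t)}_1+\binom{k_t}{2}$ for every $j$; after factoring out the common power $X^{\alpha^{(t)}_1}$, the block has exponents bounded by $\binom{k_t}{2}\le\binom{k}{2}$, hence degree polynomial in $k$ (and certainly in $\size(P)$).

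The remaining step is to zero-test each gap-free block. I would expand $(uX+v)^{\beta_j}$ only up to the degree that matters: since the block, after removing the common $X$-power, has degree at most $\binom{k}{2}+\max_j\beta_j$ — but actually we only need the low-order coefficients, and more simply, because the $X$-degree of the block is bounded, we may reduce $(uX+v)^{\beta_j}$ modulo nothing but just note that the block itself is a polynomial of polynomially bounded degree in $X$ once we observe that a nonzero block has valuation $\alpha^{(t)}_1$ and degree at most $\alpha^{(t)}_1+\binom{k_t}{2}+\max_j(\beta_j)\cdot 1$; to keep degrees small I instead expand each $(uX+v)^{\beta_j}$ as $v^{\beta_j}(1+(u/v)X)^{\beta_j}$ and truncate: only the terms $\binom{\beta_j}{i}(u/v)^i X^i$ with $i\le \binom{k_t}{2}$ can contribute to a monomial of degree $\le\alpha^{(t)}_1+\binom{k_t}{2}$ in the block, and by the valuation bound a nonzero block has a nonzero coefficient among exactly these low-degree monomials. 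So for each block I compute, over $\KK$, the coefficients of $X^{\alpha^{(t)}_1},\dots,X^{\alpha^{(t)}_1+\binom{k_t}{2}}$ — each is a $\KK$-linear combination of $O(k^2)$ products of binomial coefficients $\binom{\beta_j}{i}$ and powers $(u/v)^i$ — and test whether they all vanish. Computing $\binom{\beta_j}{i}$ for $i\le\binom{k}{2}$ takes $\mathrm{poly}(\log\beta_j,k)$ arithmetic operations, and all arithmetic is over the number field $\KK=\QQ[\xi]/\langle\varphi\rangle$, where each operation and each element has size polynomial in the inputs by standard bounds on heights of algebraic numbers under field operations.

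The correctness argument is then: $P\equiv 0$ iff every block $P_t\equiv 0$ (iterated Gap Theorem), and $P_t\equiv 0$ iff all its low-degree coefficients computed above vanish (Theorem~\ref{thm:val} applied to $P_t$, which guarantees that a nonzero $P_t$ has valuation at most $\alpha^{(t)}_1+\binom{k_t}{2}$, i.e.\ a nonzero coefficient in the computed range — here I use that within a block there is no gap so the max in Theorem~\ref{thm:val} is at most $\alpha^{(t)}_1+\binom{k_t}{2}$). For the running time: the gap-splitting phase is a single left-to-right scan, $O(k)$ comparisons of integers of size $O(\size(P))$; there are at most $k$ blocks with $\sum_t k_t=k$; each block costs $\mathrm{poly}(k,\max_j\log\beta_j,\text{size of }u,v,a_j,\varphi)$ to process; everything sums to $\mathrm{poly}(\size(P))$.

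The main obstacle I expect is bookkeeping the cost of arithmetic over $\KK$ and, relatedly, bounding the bit-size of the computed coefficients: one must check that the falling factorials / binomial coefficients $\binom{\beta_j}{i}$ with $\beta_j$ astronomically large but $i\le\binom{k}{2}$ small have size polynomial in $\log\beta_j$ and $k$ (they do, being a product of $i$ integers each of size $O(\log\beta_j)$), and that multiplying these by powers of $u/v\in\KK$ and by $a_j\in\KK$ and summing $k$ of them keeps the size polynomial — this is routine but is the only place where ``no heights'' could be quietly violated, so it deserves a careful line. A secondary subtlety is making sure the recursive gap-splitting terminates and partitions the index set correctly; this follows because each split strictly decreases the number of terms in each piece and the ``smallest $\ell$'' characterization makes the decomposition canonical.
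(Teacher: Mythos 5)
Your gap-splitting phase matches the paper's: sort the exponents, use the ``smallest $\ell$'' characterization from Theorem~\ref{thm:gap} to partition the terms into gap-free blocks, and reduce to zero-testing blocks whose $X$-exponents (after dividing out $X^{\alpha_1^{(t)}}$) are at most $\binom{k-1}{2}$. That part is correct.

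The block zero-test, however, has a genuine gap, and it sits exactly where you flagged ``the only place where no-heights could be quietly violated.'' Writing $(uX+v)^{\beta_j}=v^{\beta_j}\bigl(1+(u/v)X\bigr)^{\beta_j}$ and truncating, the coefficient of $X^{\alpha_1^{(t)}+r}$ in a block is
\[
\sum_j a_j\,\binom{\beta_j}{i_j}\,(u/v)^{i_j}\,v^{\beta_j},
\]
and you describe it as a combination of the binomial coefficients $\binom{\beta_j}{i_j}$ and the small powers $(u/v)^{i_j}$ only --- the factor $v^{\beta_j}$ has silently disappeared. The exponents $\beta_j$ are not reduced by the gap argument (only the $\alpha_j$'s are), so $v^{\beta_j}$ has bit-size exponential in $\size(P)$ and cannot be computed explicitly; deciding whether a sum $\sum_j c_j v^{\beta_j}$ with distinct huge $\beta_j$ vanishes is precisely the problem of testing whether $v$ is a root of a lacunary univariate polynomial, which over a number field needs Lenstra's height-based machinery and over finite fields is $\NP$-hard. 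The paper avoids this by the change of variables $Y=uX+v$: then $Q(Y)=\sum_j a_j u^{-\alpha_j}(Y-v)^{\alpha_j}Y^{\beta_j}$, the small exponents $\alpha_j\le\binom{k-1}{2}$ are the ones that get expanded, the huge exponents $\beta_j$ end up on the \emph{variable} $Y$ (one groups the $\mathcal O(k^3)$ monomials by their $Y$-exponent, which only requires comparing large integers), and each resulting coefficient $\sum_j\binom{\alpha_j}{\ell_j}a_ju^{-\alpha_j}(-v)^{\ell_j}$ involves only polynomially-bounded powers of $u$ and $v$. You need this substitution (or an equivalent device) for the argument to go through. A secondary omission: the theorem allows $u=0$ or $v=0$, where the Gap Theorem does not apply and the coefficients are genuinely of the form $\sum_j a_jv^{\beta_j}$; the paper handles these cases separately via Lenstra's algorithm, and your proposal does not address them.
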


\begin{proof}
We assume without loss of generality that $\alpha_{j+1}\ge\alpha_j$ for all $j$ and $\alpha_1=0$. 
If $\alpha_1$ is nonzero, $X^{\alpha_1}$ is a factor of $P$ and we consider $P/X^{\alpha_1}$.

Suppose first that $u=0$. Then $P$ is given as a sum of monomials, and we only have to test each coefficient for zero. Note that the $\alpha_j$'s are not distinct. Thus the coefficients are of the form $\sum_j a_jv^{\beta_j}$. Lenstra~\cite{Len99} gives an algorithm to find low-degree factors of 
univariate lacunary 
polynomials. It is easy to deduce from his algorithm an algorithm to test such sums for zero. A strategy could be to simply apply Lenstra's algorithm to $\sum_ja_j X^{\beta_j}$ and then check whether $(X-v)$ is a factor, but one can actually improve the complexity by extracting from his algorithm the relevant part. 
The case $v=0$ is similar. 

We assume now that $u,v\neq 0$.
We split $P$ into small parts $P=P_1+\dotsb+P_s$, such that according to the Gap Theorem, $P$ is identically zero if and only if each part $P_t$ is identically zero. Formally, let $I_1$, \dots, $I_s$ be the (unique) partition of $\{1,\dotsc,k\}$ into intervals defined recursively as follows. Let $1\in I_1$. For $1\le j<k$, suppose that 
$\{1,\dotsc,j\}$ has been partitioned into $I_1,\dotsc,I_t$, and let $i_t$ be the smallest element of $I_t$. Then 
$(j+1)\in I_t$ if $\alpha_{j+1}\le\alpha_{i_t}+\binom{j-i_t+1}{2}$, 
and $(j+1)\in I_{t+1}$ otherwise. The polynomials $P_t=\sum_{j\in I_t} a_jX^{\alpha_j}(1+X)^{\beta_j}$ satisfy the conditions of Theorem~\ref{thm:gap}. 
Therefore, we are left with testing if the $P_t$'s are identically zero. Moreover, $X^{\alpha_{i_t}}$ divides $P_t$ for each $t$ and it is thus equivalent to be able to test if each $P_t/X^{\alpha_{i_t}}$ is identically zero.

To this end, let $Q$ be a polynomial of the form~\eqref{eq:linear} satisfying $\alpha_1=0$ and $\alpha_{j+1}\le\binom{j}{2}$ for all $j$. In particular, $\alpha_k\le\binom{k-1}{2}$.
Consider the change of variables $Y=uX+v$. Then
\[Q(Y)=\sum_{j=1}^k a_ju^{-\alpha_j} (Y-v)^{\alpha_j}Y^{\beta_j}\]
is identically zero if and only if $Q(X)$ is. 
We can express $Q(Y)$ as a sum of powers of $Y$: 
\[Q(Y)=\sum_{j=1}^k\sum_{\ell=0}^{\alpha_j} a_j u^{-\alpha_j} \binom{\alpha_j}{\ell} (-v)^\ell Y^{\alpha_j+\beta_j-l}.\]
There are at most $k\binom{k-1}{2}=\mathcal O(k^3)$ 
monomials. Then, testing if $Q(Y)$ is identically zero consists in testing each coefficient for zero. 
Moreover, each coefficient has the form $\sum_j \binom{\alpha_j}{\ell_j} a_j u^{-\alpha_j} (-v)^{\ell_j}$ where the sum ranges over at most $k$ indices. Since $\ell_j,\alpha_j\le\binom{k-1}{2}$ for all $j$, the terms in these sums have polynomial bit-lengths. 
Therefore, the coefficients can be tested for zero in polynomial time.

Altogether, this gives a polynomial-time algorithm to test 
if $P$ is identically zero.
\end{proof}

\begin{theorem}\label{thm:factorization}
Let 
\[P(X,Y)=\sum_{j=1}^k a_j X^{\alpha_j}Y^{\beta_j}\in\KK[X,Y].\]
There exists a deterministic polynomial-time algorithm that finds all the linear factors of $P$, together with their multiplicities.
\end{theorem}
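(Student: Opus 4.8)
The plan is to reduce the bivariate factorization problem to repeated applications of the univariate zero-test of Theorem~\ref{thm:pit}, combined with Lenstra's univariate algorithm for the degenerate cases. A linear factor of $P(X,Y)$ is, up to scaling, of one of three types: $(uX+vY+w)$ with $uvw\neq 0$, a factor with one of $u,v,w$ zero, or a factor not depending on one variable (e.g.\ $X$, $Y$, or $Y-c$). First I would dispose of the easy types: the factor $X$ (resp.\ $Y$) divides $P$ with multiplicity $\min_j\alpha_j$ (resp.\ $\min_j\beta_j$), read off directly. Factors of the form $X-c$, $Y-c$, $Y-cX$ reduce to univariate lacunary factorization: specializing one variable (or dehomogenizing) yields a univariate lacunary polynomial to which Lenstra's algorithm~\cite{Len99} applies to find all its roots in $\KK$ in polynomial time; multiplicities are obtained by dividing out and repeating, or by testing higher-order derivatives via the same zero-test.

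The main case is finding all factors $(uX+vY+w)$ with $uvw\neq 0$. Here I would substitute $Y \mapsto -(uX+w)/v$, i.e.\ work modulo the ideal generated by $uX+vY+w$: the factor $(uX+vY+w)$ divides $P$ iff $P(X, -(uX+w)/v)\equiv 0$. After clearing denominators this is a polynomial of the form $\sum_j a_j' X^{\alpha_j}(uX+w)^{\beta_j}$ (times a power of $v$), which is exactly shape~\eqref{eq:linear}, so Theorem~\ref{thm:pit} tests it for zero in polynomial time \emph{given} $u,v,w$. The difficulty, of course, is that we do not know $u,v,w$ in advance and there are infinitely many candidates. The resolution, following~\cite{KK05}, is to use the Gap Theorem to bound the problem: apply Theorem~\ref{thm:gap} (with the roles of the monomial and the linear form suitably assigned) to split $P$, after the substitution, into gap-free blocks; within a single block the exponents $\alpha_j$ are controlled (at most $\binom{k-1}{2}$ after normalization), so the condition that $(uX+vY+w)$ divides that block becomes a system of polynomial equations of polynomially-bounded degree in $u,v,w$. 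One then argues that a common linear factor of $P$ must be a common factor of each block, and recovers the finitely many candidate ratios $(u:v:w)$ by solving a bounded-degree bivariate system (e.g.\ by resultants, or by recursion on $k$ / the subfield structure), each candidate then being certified by the zero-test. Multiplicities are handled by the standard device of also testing whether the factor divides the relevant Wronskian-type derivative, or by dividing out and iterating.

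I would organize the algorithm as: (1) extract the powers of $X$ and $Y$; (2) run Lenstra's algorithm on the dehomogenizations $P(X,1)$, $P(1,Y)$ and on the homogeneous-degree specialization to catch all factors with $uvw=0$ as well as $Y-cX$; (3) for the generic factors, perform the substitution, apply the Gap Theorem to obtain gap-free blocks, extract from each block a bounded-degree polynomial system in $(u,v,w)$, solve it to get candidate factors, and certify each with Theorem~\ref{thm:pit}; (4) for each confirmed linear factor $L$, compute its multiplicity by repeatedly testing divisibility of $P/L^i$ (again via the zero-test after substitution) until it fails. The main obstacle is step~(3): ensuring that passing to gap-free blocks genuinely reduces the search for $(u,v,w)$ to a polynomial-size, bounded-degree algebraic problem, and that no generic linear factor is lost in the splitting — this is precisely where the Gap Theorem does the essential work, replacing the height-based gap bound of~\cite{KK05} with our valuation-based bound of Theorem~\ref{thm:val}. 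The remaining steps are routine given Theorem~\ref{thm:pit} and~\cite{Len99}, and the complexity analysis follows the same lines as in~\cite{KK05} with the size bounds from~\eqref{eq:sizeP} and the cubic bound $k\binom{k-1}{2}$ on the number of monomials arising in each block.
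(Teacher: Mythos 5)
Your overall architecture matches the paper's: dispose of factors involving only one variable (and the other degenerate cases) via Lenstra's univariate algorithm~\cite{Len99}, substitute the linear form into $P$, and use the Gap Theorem to split $P$ into blocks whose \emph{common} linear factors are exactly the linear factors of $P$. But your step~(3) has a genuine gap. After splitting with respect to the $X$-exponents you only control the $\alpha_j$'s; a block $\sum_j a_j X^{\alpha_j}Y^{\beta_j}$ with $\alpha_j\le\binom{k-1}{2}$ still has astronomically large degree in $Y$, so the condition that $(uX+vY+w)$ divides it is \emph{not} a bounded-degree system in $(u,v,w)$: reducing modulo the linear form produces coefficients of degree up to $\max_j\beta_j$ in $u,v,w$, i.e.\ exponential in the input size. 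The paper fixes this by applying the gap-splitting a \emph{second} time, with the roles of $X$ and $Y$ exchanged, inside each block; only then are the resulting pieces $P_{i\ell}$ genuinely dense bivariate polynomials of degree $O(k^2)$ in each variable. Moreover, once a block is low-degree, ``solving a bounded-degree system by resultants'' to enumerate its linear factors is the dense bivariate factorization problem in disguise; the clean way to finish (and the paper's way) is to invoke a known deterministic polynomial-time dense factorization algorithm over number fields~\cite{Kal85,Len87,Gao03,Lec07} on the $P_{i\ell}$'s and intersect the resulting lists of linear factors, or factor their gcd.

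Your treatment of multiplicities also does not work as stated: one cannot ``divide out and iterate'' on $P$ itself, because $P/L$ need not be lacunary --- its number of monomials can be comparable to $\deg P$, hence exponential in the input size. The paper sidesteps this by observing that the multiplicity of a linear factor of $P$ equals its multiplicity in the gcd of the $P_{i\ell}$'s, a polynomial of degree $O(k^2)$ from which multiplicities are read off directly. (Your alternative of testing higher-order derivatives could be made to work, but each derivative would again require the full gap machinery.) The degenerate cases and the use of Theorem~\ref{thm:pit} for certification are fine.
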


\begin{proof}
A linear factor of $P$ is either of the form $(Y-uX-v)$ or $(X-a)$. To search factors of the form $(X-a)$, we see $P$ as a univariate polynomial in $Y$ whose coefficients are univariate polynomials in $X$. Then, $(X-a)$ is a factor of $P$ if and only if it is a factor of all the coefficients of $P$ viewed as a polynomial in $Y$. Lenstra gives an algorithm to compute linear factors of univariate lacunary polynomials~\cite{Len99}. Thus, we can find all the factors of the form $(X-a)$ and their multiplicities using his algorithm.

Now $(Y-uX-v)$ is a factor of $P$ if and only if $P(X,uX+v)$ vanishes identically. 
We can assume that $u\neq 0$. If $v=0$, $P(X,uX)=\sum_j a_ju^{\beta_j} X^{\alpha_j+\beta_j}$. Therefore, it vanishes if and only if each coefficient vanishes. But a coefficient of this polynomial is of the form $\sum_j a_ju^{\beta_j}$. Testing such a coefficient for zero is done in polynomial time using Lenstra's algorithm as in the proof of Theorem~\ref{thm:pit}, and there are at most $k$ of them to test.

Suppose now that $u,v\neq 0$.
Since $P(X,uX+v)$ is of the form~\eqref{eq:linear}, we can use our Gap Theorem (Theorem~\ref{thm:gap}) as in the proof of Theorem~\ref{thm:pit}: Let $P=\sum_{i=1}^s X^{\alpha_{(i)}} P_i$ where each $P_i$ is of the form~\eqref{eq:linear} and satisfies $\alpha_1=0$ and $\alpha_k\le\binom{k-1}{2}$. 
Then by Theorem~\ref{thm:gap}, $P(X,uX+v)$ vanishes if and only if $P_i(X,uX+v)$ vanishes for every $i$. Now apply the same transformation to each $P_i$, inverting the roles of $X$ and $Y$. Then each $P_i$ can be written as the sum $\sum_{\ell=1}^{s_i} Y^{\beta_{(\ell)}}P_{i\ell}$ where each $P_{i\ell}$ is of the form~\eqref{eq:linear} and satisfies $\alpha_1=\beta_1=0$ and $\alpha_k,\beta_k\le\binom{k-1}{2}$. 
Furthermore, $P(X,uX+v)$ vanishes if and only if all the $P_{i\ell}(X,uX+v)$ vanish.

Since the $P_{i\ell}$'s are low-degree polynomials, and there are at most $k$ of them, one can find all their linear factors. 
This relies on one of the numerous deterministic polynomial-time algorithms to factor dense multivariate polynomials that appear in the literature, from \cite{Kal85,Len87} to \cite{Gao03,Lec07}.
By the above discussion, the linear factors of $P$ are exactly the linear factors that all the $P_{i\ell}$'s have in common. Several strategies can be used to find these linear factors: Either we search the linear factors of all the $P_{i\ell}$'s and keep only the ones they have in common, or we search the linear factors of one particular $P_{i\ell}$ (for instance the one of smallest degree) and test if they are factors of the other $P_{i\ell}$'s using our PIT algorithm, or we compute the gcd of all the $P_{i\ell}$'s and then search its linear factors. In particular, this last solution directly gives the multiplicities of the factors of $P$, since it is the same as their multiplicities in the gcd.
\end{proof}

As Kaltofen and Koiran's algorithm~\cite{KK05}, our algorithm uses Lenstra's algorithm for univariate lacunary polynomials~\cite{Len99} to find univariate factors of the input polynomial. 
To compare both algorithms, let us thus focus on the task on finding truly bivariate linear factors, that is of the form $(Y-uX-v)$ with $uv\neq 0$.  

A first remark concerns the simplicity of the algorithm. 
The computation of the gap function is much simpler in our case since we do not have to compute the height of the coefficients. This means that the task of finding the gaps in the input polynomial is reduced to completely combinatorial considerations. 

Both our and Kaltofen and Koiran's algorithms use a dense factorization algorithm as a subroutine. This is in both cases the main computational task since the rest of the algorithm is devoted to the computation of the gaps in the input polynomial. Thus, a relevant measure to estimate the complexity of these algorithms is the maximum degree of the polynomials given as input to the dense factorization algorithm. This maximum degree is given by the values of the gaps in the two Gap Theorems. In our algorithm, the maximum degree is $\binom{k}{2}$. In Kaltofen and Koiran's, it is $\mathcal O(k\log k+k\log h_P)$ where $h_P$ is the \emph{height} of the polynomial $P$ and the value $\log(h_P)$ is a bound on the size of the coefficients of $P$. For instance, if the coefficients of $P$ are integers, then $h_P$ is the maximum of their absolute values.
Therefore, our algorithm has a better asymptotic complexity as soon as the size of the coefficients exceeds the number $k$ of terms. Furthermore, the hidden constant in the bound for Kaltofen and Koiran's algorithm is only known to be bounded by approximately $15$ while the corresponding constant in our case is $1/2$.

Note that an improvement of Theorem~\ref{thm:val} to a linear bound instead of a quadratic one would give us a better complexity than Kaltofen and Koiran's algorithm for all polynomials. Finally, it is naturally possible to combine both Gap Theorems in order to obtain the best complexity in all cases.

\section{Generalizations}

In this section, we aim to prove some generalizations of the results obtained in Sections~\ref{sec:val} and \ref{sec:algo}. 
The field $\KK$ is still supposed to be an algebraic number field as in Section~\ref{sec:algo}, 
unless otherwise stated.

Our first generalization shows that the identity test algorithm of Theorem~\ref{thm:pit} can be extended to a slightly more general family of polynomials. Namely, the linear polynomial $(uX+v)$ can be replaced by any $2$-sparse polynomial. 

\begin{theorem} 
Let $P=\sum_{j=1}^k a_j X^{\alpha_j} (uX^d+v)^{\beta_j}$.
There exists a deterministic polynomial-time algorithm to decide if 
the polynomial $P$
is identically zero.
\end{theorem}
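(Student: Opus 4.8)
The plan is to reduce the case of $P = \sum_{j=1}^k a_j X^{\alpha_j}(uX^d + v)^{\beta_j}$ to the already-established case $d = 1$ (Theorem~\ref{thm:pit}) by a substitution argument combined with a decomposition according to residues of the exponents $\alpha_j$ modulo $d$. First I would dispose of the degenerate cases $u = 0$ and $v = 0$ exactly as in the proof of Theorem~\ref{thm:pit}: when $u = 0$ the polynomial is a sum of $k$ monomials whose coefficients $\sum_j a_j v^{\beta_j}$ are tested for zero via Lenstra's algorithm, and the case $v = 0$ is symmetric. So assume $u, v \neq 0$.

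The key observation is that $uX^d + v$ is a univariate polynomial in the single variable $Z = X^d$, so it is natural to write $\alpha_j = d q_j + r_j$ with $0 \le r_j < d$ and group the terms of $P$ by the value of $r_j$. For a fixed residue $r \in \{0, \dots, d-1\}$, the sub-sum $P_r := \sum_{j : r_j = r} a_j X^{\alpha_j}(uX^d+v)^{\beta_j}$ equals $X^r \cdot \sum_{j : r_j = r} a_j (X^d)^{q_j}(u X^d + v)^{\beta_j}$, which upon the substitution $Z = X^d$ becomes $X^r \cdot Q_r(Z)$ where $Q_r(Z) = \sum_{j : r_j = r} a_j Z^{q_j}(uZ + v)^{\beta_j}$ is precisely of the form~\eqref{eq:linear}. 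Since the polynomials $P_r$ for distinct $r$ involve disjoint sets of exponents modulo $d$ (every monomial appearing in $P_r$ has degree $\equiv r \pmod d$), $P$ is identically zero if and only if each $P_r$ is, hence if and only if each $Q_r(Z)$ is identically zero. Each $Q_r$ can then be tested for zero using the algorithm of Theorem~\ref{thm:pit}, and there are only $d \le \log(\text{something polynomial in }\size(P))$... more carefully, $d$ itself can be exponential in the input size, but the number of \emph{nonempty} $P_r$'s is at most $k$, so we only run the subroutine at most $k$ times.

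The main obstacle I anticipate is bookkeeping around the size of the data fed to the subroutine: the exponents $q_j = \lfloor \alpha_j/d \rfloor$ and the residues $r_j$ have bit-length bounded by $\log \alpha_j$, so they are polynomial in $\size(P)$, and the coefficients $a_j$ are unchanged; thus each $Q_r$ has size polynomial in $\size(P)$ and Theorem~\ref{thm:pit} applies in polynomial time. One should double-check that the Gap Theorem machinery inside Theorem~\ref{thm:pit} is invoked correctly on $Q_r$, in particular that sorting the $q_j$'s and extracting the $X^{\text{(valuation)}}$ factor behaves as expected; but since $Q_r$ is literally an instance of~\eqref{eq:linear}, nothing new is needed. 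A final minor point is to verify that the substitution $Z = X^d$ genuinely preserves identity-to-zero in both directions, which holds because $X \mapsto X^d$ is injective on monomials (it multiplies every exponent by $d$), so no cancellation is created or destroyed. Assembling these pieces yields the claimed deterministic polynomial-time algorithm.
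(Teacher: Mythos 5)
Your proposal is correct and follows essentially the same route as the paper: Euclidean division $\alpha_j = dq_j + r_j$, grouping terms by the residue $r_j$ so that distinct groups contribute monomials in disjoint congruence classes modulo $d$, observing that at most $k$ groups are nonempty, and feeding each resulting polynomial of the form~\eqref{eq:linear} to the algorithm of Theorem~\ref{thm:pit}. The extra bookkeeping you include (bit-lengths of $q_j$ and $r_j$, the degenerate cases $u=0$ or $v=0$) is sound and consistent with what the paper delegates to Theorem~\ref{thm:pit}.
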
 

In the theorem, $(uX^d+v)$ could be replaced by the seemingly more general expression $(uX^d+vX^{d'})$ with $d>d'>0$. Yet, in this case we can factor out $X^{d'}$. A term $X^{\alpha_j}(uX^d+vX^{d'})^{\beta_j}$ can thus be written $X^{\alpha_j+d'\beta_j} (uX^{d-d'}+v)^{\beta_j}$. This has the same form as in the theorem, replacing $\alpha_j$ by $(\alpha_j+d'\beta_j)$ and $d$ by $(d-d')$.

The size of the polynomial in the statement of the theorem is defined as in Equation~\eqref{eq:sizeP} of 
Section~\ref{sec:algo} with the additional term $\log d$ in the sum. This means that the complexity of the algorithm is still polylogarithmic in the degree. 

\begin{proof}
For all $j$ we consider the Euclidean division of $\alpha_j$ by $d$: $\alpha_j=q_jd+r_j$ with $r_j<d$. We rewrite $P$ as
\[P=\sum_{j=1}^k a_j X^{r_j} (X^d)^{q_j} (uX^d+v)^{\beta_j}.\]
Let us group in the sum all the terms with a common $r_j$. That is, let 
\[P_i(Y)=\sum_{\substack{1\le j\le k\\ r_j=i}} a_j Y^{q_j} (uY+v)^{\beta_j}\]
for $0\le i<d$. We remark that regardless of the value of $d$, the number of nonzero $P_i$'s is bounded by $k$. We have $P(X)=\sum_{i=0}^{d-1} X^i P_i(X^d)$. Each monomial $X^\alpha$ of $X^iP_i(X^d)$ satisfies $\alpha\equiv i\mod d$. Therefore, $P$ is identically zero if and only if all the $P_i$'s are identically zero.

Since each $P_i$ is of the form~\eqref{eq:linear}, and there are at most $k$ of them, we can apply the algorithm of Theorem~\ref{thm:pit} to each of them.
\end{proof}

We now state a generalization of Theorem~\ref{thm:val}. A special case of this generalization is used in the following to extend our factorization algorithm of Theorem~\ref{thm:factorization}. 
It is not known whether the most general version of the theorem can be used to further extend our algorithms to be able to find small-degree factors of lacunary polynomials.

Note that this result holds whatever field $\KK$ of characteristic zero is considered.

\begin{theorem}\label{thm:generalization}
Let 
$(\alpha_{ij})\in\ZZ_+^{m\times k}$ and
\[P=\sum_{j=1}^k a_j \prod_{i=1}^m f_i^{\alpha_{ij}}\in\KK[X],\]
where the degree of $f_i\in\KK[X]$ is $d_i$ for all $i$. 
Let $\xi\in\KK$ and denote by $\mu_i$ the multiplicity of $\xi$ as a root of $f_i$. Then the multiplicity $\mu_P(\xi)$ of $\xi$ as a root of $P$ satisfies
\[\mu_P(\xi)\le \max_{1\le j\le k}\ \sum_{i=1}^m\left( \mu_i\alpha_{ij}+(d_i-\mu_i)\binom{k+1-j}{2}\right).\]
\end{theorem}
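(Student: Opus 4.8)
The plan is to run the Wronskian argument of Section~\ref{sec:val}, with ``order of vanishing at $\xi$'' playing the role of the valuation. Write $g_j=\prod_{i=1}^m f_i^{\alpha_{ij}}$ and, for $g\in\KK[X]$, let $\mu_g(\xi)$ denote the multiplicity of $\xi$ as a root of $g$, so that $\mu_{f_i}(\xi)=\mu_i$ and $\mu_{g_j}(\xi)=\sum_i\mu_i\alpha_{ij}=:\nu_j$; also put $D=\sum_i d_i$ and $M=\sum_i\mu_i$. The map $g\mapsto\mu_g(\xi)$ is additive on products, satisfies $\mu_{g'}(\xi)\ge\mu_g(\xi)-1$ (this is where characteristic $0$ is used, cf.\ the remark after Theorem~\ref{thm:val}), and satisfies $\mu_g(\xi)\le\deg g$ for $g\ne 0$. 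Hence the proof of Lemma~\ref{lemma:valinf} carries over verbatim: for any $h_1,\dots,h_d\in\KK[X]$, one has $\mu_{\wronskian(h_1,\dots,h_d)}(\xi)\ge\sum_{l=1}^d\mu_{h_l}(\xi)-\binom d2$.

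The real work is the analogue of Lemma~\ref{lemma:valsup}. Assume $\alpha_{ij}\ge d$ for all $i,j$ and that $g_1,\dots,g_d$ are linearly independent. Since each differentiation lowers every exponent $\alpha_{lj}$ by at most one, $\prod_l f_l^{\alpha_{lj}-d}$ is a common factor of the $j$-th column of $\wronskian(g_1,\dots,g_d)$ and $\prod_l f_l^{d-i}$ is a common factor of the $i$-th row; pulling these out writes $\wronskian(g_1,\dots,g_d)=G\cdot\det(M')$, where $M'_{i,j}=g_j^{(i)}/\prod_l f_l^{\alpha_{lj}-i}$. A short exponent computation (using $\binom{d+1}{2}-d^2=-\binom d2$) gives $\mu_G(\xi)=\sum_j\nu_j-M\binom d2$, while $\deg M'_{i,j}\le i(D-1)$, since $\deg g_j=\sum_l d_l\alpha_{lj}$, differentiation costs $i$, and the removed factor has degree $\sum_l d_l(\alpha_{lj}-i)$; summing over rows, $\deg\det(M')\le(D-1)\binom d2$. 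As $\det(M')\ne 0$ by Proposition~\ref{prop:wronskian}, we get $\mu_{\det M'}(\xi)\le\deg\det(M')$, and therefore $\mu_{\wronskian(g_1,\dots,g_d)}(\xi)\le\sum_j\nu_j+(D-M-1)\binom d2$. I expect this degree bound on the residual matrix $M'$ to be the step that needs the most care.

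Granting the two lemmas, the conclusion follows the pattern of the proof of Theorem~\ref{thm:val}. First, replacing $P$ by $\prod_{i=1}^m f_i^{k}\cdot P$ raises every exponent to at least $k$ and shifts both sides of the target inequality by $k\sum_i\mu_i$, so we may assume $\alpha_{ij}\ge k$ for all $i,j$. Next, extract a maximal linearly independent subfamily $g_{j_1},\dots,g_{j_d}$ with $j_1<\dots<j_d$ (still with all exponents $\ge k\ge d$), write $P=\sum_{l=1}^d b_l\,g_{j_l}$, and fix some $l_0$ with $b_{l_0}\ne 0$ (possible since $P\ne 0$). Replacing the $l_0$-th column of $\wronskian(g_{j_1},\dots,g_{j_d})$ by $P$ multiplies the determinant by $b_{l_0}\ne 0$, hence leaves its order at $\xi$ unchanged; combining the analogue of Lemma~\ref{lemma:valinf} (as a lower bound) with the analogue of Lemma~\ref{lemma:valsup} (as an upper bound) yields
\[\mu_P(\xi)\ \le\ \nu_{j_{l_0}}+(D-M)\binom d2\ =\ \sum_{i=1}^m\Bigl(\mu_i\,\alpha_{i,j_{l_0}}+(d_i-\mu_i)\binom d2\Bigr).\]
Finally, from $j_{l_0}<j_{l_0+1}<\dots<j_d\le k$ we get $d\le k+1-j_{l_0}$, so $\binom d2\le\binom{k+1-j_{l_0}}2$, and the right-hand side above is at most $\max_{1\le j\le k}\sum_{i=1}^m\bigl(\mu_i\alpha_{ij}+(d_i-\mu_i)\binom{k+1-j}2\bigr)$, which is the claimed bound. (If $P=0$ there is nothing to prove.)
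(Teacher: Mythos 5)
Your proposal follows the paper's proof almost line by line: the same Wronskian argument with ``multiplicity at $\xi$'' in place of the valuation, the same factorization $\wronskian=G\cdot\det(M')$ with $\mu_G(\xi)=\sum_j\nu_j-M\binom d2$ and $\deg\det(M')\le(D-1)\binom d2$, the same column replacement, and the same two reductions (basis extraction, multiplication by $\prod_i f_i^{k}$). All of those steps check out. One small misattribution: the inequality $\mu_{g'}(\xi)\ge\mu_g(\xi)-1$ holds in any characteristic; characteristic $0$ enters only through Proposition~\ref{prop:wronskian}, i.e., to guarantee $\det(M')\ne0$.

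The one step that fails is the very last one. From $j_{l_0}<j_{l_0+1}<\dots<j_d\le k$ you can only conclude $d-l_0+1\le k+1-j_{l_0}$, i.e., $d\le k-j_{l_0}+l_0$; the inequality $d\le k+1-j_{l_0}$ that you invoke holds only when $l_0=1$. If the coefficient $b_1$ of the first basis element vanishes and you are forced to take $l_0>1$, your intermediate bound $\nu_{j_{l_0}}+(D-M)\binom d2$ need not be dominated by the $j=j_{l_0}$ term of the maximum (for instance $k=3$ with basis $\{g_2,g_3\}$ and $b_1=0$ gives $\binom d2=1$ while $\binom{k+1-j_{l_0}}2=0$), and since the theorem imposes no ordering on the $\nu_j$'s no other term of the maximum obviously absorbs it. The fix is immediate: take $l_0$ \emph{minimal} with $b_{l_0}\ne0$; then $P$ lies in the span of the still linearly independent subfamily $g_{j_{l_0}},\dots,g_{j_d}$, which has $d'=d-l_0+1\le k+1-j_{l_0}$ elements and whose first element now carries a nonzero coefficient, so running your argument on that smaller family yields $\mu_P(\xi)\le\nu_{j_{l_0}}+(D-M)\binom{d'}2\le\nu_{j_{l_0}}+(D-M)\binom{k+1-j_{l_0}}2$, as required. (The paper's own write-up silently assumes the coefficient of $g_{j_1}$ is nonzero; you were right to worry about this point, but the repair has to shrink the family, not merely move the replaced column.)
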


A proof of this theorem is given in Appendix~\ref{app:generalization}. Note that it can be stated in the more general settings of rational exponents $\alpha_{ij}$. It can then be seen as a generalization of a result of Kayal and Saha~\cite[Theorem~2.1]{KS11}.

The following corollary, used to find multilinear factors of bivariate lacunary polynomials, is a direct consequence of the theorem.

\begin{corollary} \label{lemma:prod3}
Let $P=\sum_{j=1}^k a_j X^{\alpha_j}(uX+v)^{\beta_j}(wX+t)^{\gamma_j}$, $uvwt\neq0$. If $P$ is nonzero, its valuation is at most
$\max_{1\le j\le k} (\alpha_j+2\binom{k+1-j}{2})$.
\end{corollary}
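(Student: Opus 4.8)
The plan is to apply Theorem~\ref{thm:generalization} as a black box, with $m=3$, the three linear polynomials $f_1=X$, $f_2=uX+v$, $f_3=wX+t$, and the distinguished point $\xi=0$, so that $\mu_P(\xi)$ becomes exactly $\val(P)$. First I would record the parameters needed by the theorem. Since $uvwt\neq0$, each $f_i$ genuinely has degree $d_i=1$. The multiplicity of $0$ as a root of $f_1=X$ is $\mu_1=1$; and since $v\neq0$ and $t\neq0$, the point $0$ is not a root of $f_2$ or $f_3$, so $\mu_2=\mu_3=0$. With the identification $\alpha_{1j}=\alpha_j$, $\alpha_{2j}=\beta_j$, $\alpha_{3j}=\gamma_j$, Theorem~\ref{thm:generalization} gives, for $P$ nonzero, $\val(P)\le\max_{1\le j\le k}\sum_{i=1}^3\bigl(\mu_i\alpha_{ij}+(d_i-\mu_i)\binom{k+1-j}{2}\bigr)$.

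The second step is to simplify this summand term by term. The $i=1$ contribution is $\mu_1\alpha_j+(d_1-\mu_1)\binom{k+1-j}{2}=\alpha_j+0=\alpha_j$; the $i=2$ contribution is $\mu_2\beta_j+(d_2-\mu_2)\binom{k+1-j}{2}=0+\binom{k+1-j}{2}$; and the $i=3$ contribution is likewise $\binom{k+1-j}{2}$. Adding the three gives $\alpha_j+2\binom{k+1-j}{2}$, so $\val(P)\le\max_{1\le j\le k}\bigl(\alpha_j+2\binom{k+1-j}{2}\bigr)$, which is precisely the claimed bound. One only has to note that Theorem~\ref{thm:generalization} is valid over any field of characteristic zero, which is satisfied here since $\KK$ is such a field.

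Honestly there is no real obstacle: the entire content is the correct instantiation of the generalization and the bookkeeping of the multiplicities $\mu_i$ and degrees $d_i$, after which the bound drops out by arithmetic. If one preferred a self-contained argument avoiding Theorem~\ref{thm:generalization}, the proof of Theorem~\ref{thm:val} adapts verbatim: lower-bound the valuation of the Wronskian of the family $\bigl(X^{\alpha_j}(uX+v)^{\beta_j}(wX+t)^{\gamma_j}\bigr)_j$ by Lemma~\ref{lemma:valinf}, then upper-bound it by pulling out of the $j$-th column the common factor $X^{\alpha_j-k}(uX+v)^{\beta_j-k}(wX+t)^{\gamma_j-k}$ and out of the $i$-th row the common factor $X^{k-i}(uX+v)^{k-i}(wX+t)^{k-i}$, observing that the residual determinant has degree at most $2\binom{k}{2}$ (two "extra" linear factors per column instead of one); combining the two bounds after replacing a basis element by $P$ yields $\val(P)\le\alpha_1+2\binom{k}{2}$ in the linearly independent case, and the stated bound then follows by passing to a basis exactly as in the proof of Theorem~\ref{thm:val}.
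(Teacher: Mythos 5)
Your proposal is correct and matches the paper's intent exactly: the paper presents this corollary as a direct consequence of Theorem~\ref{thm:generalization}, and your instantiation with $m=3$, $f_1=X$, $f_2=uX+v$, $f_3=wX+t$, $\xi=0$, $d_i=1$, $\mu_1=1$, $\mu_2=\mu_3=0$ is precisely the intended derivation. The arithmetic yielding $\alpha_j+2\binom{k+1-j}{2}$ is right, and your alternative self-contained Wronskian sketch is also consistent with how the general theorem is proved in the appendix.
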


We now describe how to use this corollary to get a new factorization algorithm. Compared to Theorem~\ref{thm:factorization}, we are now able to find the multilinear factors instead of the linear ones.

\begin{theorem}\label{thm:multilinear}
Let $P=\sum_{j=1}^k a_j X^{\alpha_j}Y^{\beta_j}$. There exists a deterministic polynomial time algorithm to compute all the \emph{multilinear} factors of $P$, with multiplicity.
\end{theorem}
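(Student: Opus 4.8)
The plan is to mirror the proof of Theorem~\ref{thm:factorization}, replacing Theorem~\ref{thm:val} by Corollary~\ref{lemma:prod3} and adding one extra ``gap'' reduction. First I would classify the multilinear factors. A multilinear polynomial in $X,Y$ has the shape $c_0+c_1X+c_2Y+c_3XY$; if $c_3=0$ it is linear, and if $c_3\neq0$ it is either irreducible over $\KK$ (exactly when $c_0c_3\neq c_1c_2$) or equal to $c_3(X-a)(Y-b)$. Hence, up to a scalar, every multilinear factor of $P$ is an irreducible linear factor, or an irreducible \emph{bilinear} factor $F=(wX+t)Y-(uX+v)$ with $w\neq0$ and $wv\neq ut$, or a product $(X-a)(Y-b)$ of an $X$-linear and a $Y$-linear factor of $P$. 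The irreducible linear factors and their multiplicities are produced by Theorem~\ref{thm:factorization}; from them one reads off the admissible products $(X-a)(Y-b)$, whose multiplicity in $P$ is the minimum of the two individual multiplicities (the two factors being coprime). So the only new task is to compute the irreducible bilinear factors with multiplicities.

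Fix such an $F=(wX+t)Y-(uX+v)$. Since $wv\neq ut$, $wX+t$ is coprime to $F$ in $\KK[X,Y]$, and, expanding $P$ in a Taylor series in $Y$ around $Y=(uX+v)/(wX+t)$, one sees that $F^m$ divides $P$ if and only if $\partial_Y^rP\!\left(X,\frac{uX+v}{wX+t}\right)=0$ as a rational function for every $r<m$; in particular $F\mid P$ iff the case $r=0$ holds. Clearing the denominator $(wX+t)^{\beta^{\ast}}$ with $\beta^{\ast}=\max_j\beta_j$, this is equivalent to the identical vanishing of $\sum_j a_j(\beta_j)_r\,X^{\alpha_j}(uX+v)^{\beta_j-r}(wX+t)^{\beta^{\ast}-\beta_j+r}$, a polynomial of the form covered by Corollary~\ref{lemma:prod3} whenever $uvwt\neq0$. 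The degenerate cases $u=0$, $v=0$ or $t=0$ reduce to polynomials of the form~\eqref{eq:linear} (when two linear forms collapse to one) or to sums of monomials, and both the identity test and the search for all such $F$ are handled exactly as in the proofs of Theorems~\ref{thm:pit} and \ref{thm:factorization}, using Lenstra's algorithm for the one-parameter univariate lacunary subproblems.

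Next comes the double-gap reduction. From Corollary~\ref{lemma:prod3}, exactly as Theorem~\ref{thm:gap} follows from Theorem~\ref{thm:val}, one obtains a Gap Theorem for polynomials $\sum_j a_jX^{\alpha_j}(uX+v)^{\beta_j}(wX+t)^{\gamma_j}$, with a gap at $\ell$ as soon as $\alpha_{\ell+1}>\max_{j\le\ell}(\alpha_j+2\binom{\ell+1-j}{2})$. Using only the $\alpha_j$'s of $P$, partition $\{1,\dots,k\}$ into intervals and write $P=\sum_i X^{\alpha_{(i)}}P_i$ as in Theorem~\ref{thm:factorization}, so that each $P_i$ has $X$-degree $O(k^2)$ and $F^m\mid P$ iff $F^m\mid P_i$ for all $i$: the partition depends only on the exponents, not on $F$, and the same partition works simultaneously for all the derivatives $\partial_Y^rP$ because those share the exponent structure of $P$ (taking $\partial_Y^r$ only shifts all $\beta_j$ by $-r$ and kills terms with $\beta_j<r$). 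Now repeat inside each $P_i$ with the roles of $X$ and $Y$ exchanged: substituting $X=\frac{v-tY}{wY-u}$ and clearing denominators turns $P_i$ into a polynomial in $Y$ covered by Corollary~\ref{lemma:prod3} (the linear forms being $v-tY$ and $wY-u$, with exponents bounded by the $O(k^2)$ bound on $\deg_X P_i$), and the corresponding $Y$-gaps --- read off the $\beta_j$'s alone --- split $P_i=\sum_\ell Y^{\beta_{(i\ell)}}P_{i\ell}$ with each $P_{i\ell}$ of degree $O(k^2)$ in both $X$ and $Y$, such that $F^m\mid P_i$ iff $F^m\mid P_{i\ell}$ for all $\ell$. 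There are at most $k$ distinct $P_{i\ell}$, each a dense bivariate polynomial whose coefficients are among the $a_j$, hence of polynomial size.

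Finally, $F^m\mid P$ iff $F^m\mid P_{i\ell}$ for all $i,\ell$ and every $m$, so the multiplicity of $F$ in $P$ equals $\min_{i\ell}$ of the multiplicities of $F$ in the $P_{i\ell}$, i.e.\ $F$ and its multiplicity can be recovered from $G:=\gcd_{i\ell}P_{i\ell}$. I would compute $G$ (a dense bivariate gcd), factor it with a deterministic dense factorization algorithm, and collect the irreducible factors of $G$ that have degree $1$ in each of $X$ and $Y$: these are exactly the irreducible bilinear factors of $P$, with the correct multiplicities. Together with the linear factors of Theorem~\ref{thm:factorization} and the products $(X-a)(Y-b)$ described above, this yields all multilinear factors of $P$ with multiplicity. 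I expect the main obstacle to be the multiplicity bookkeeping together with the degenerate linear forms: one must verify that the derivatives $\partial_Y^rP$ all inherit the gap structure of $P$, so that a single $F$-independent partition of $P$ validates $F^m\mid P\iff(\forall i,\ell)\,F^m\mid P_{i\ell}$ uniformly in $m$, and one must dispose carefully of every case in which one of $u,v,w,t$ vanishes.
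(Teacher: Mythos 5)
Your proposal follows essentially the same route as the paper's (sketched) proof: substitute $Y=\frac{uX+v}{wX+t}$, clear denominators to reach the form of Corollary~\ref{lemma:prod3}, derive the corresponding Gap Theorem with the doubled binomial term, perform the two-stage decomposition into low-degree pieces $P_{i\ell}$, and finish with dense gcd/factorization. The extra material you supply --- the classification of reducible multilinear factors as products of linear ones, and the verification via $\partial_Y^r P$ that the gap partition is uniform in the multiplicity $m$ --- is correct and in fact fills in details that the paper's proof sketch leaves implicit.
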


\begin{proof}[Proof sketch]
The proof goes along the same lines as the proof of Theorem~\ref{thm:factorization}. Suppose that $XY-(aX-bY+c)$ is a factor of $P$. Then the rational function $P(X,\frac{aX+c}{X+b})$ vanishes identically. Let us assume for simplicity that $a,b,c\neq 0$. (The other cases can be handled separately, as in the proof of Theorem~\ref{thm:factorization}.) Let 
\[Q(X)=(X+b)^{\max_i\beta_i} P(X,\frac{aX+c}{X+b})=\sum_{j=1}^k a_j X^{\alpha_j}(aX+c)^{\beta_j}(X+b)^{\gamma_j}\] 
where $\gamma_j=\max_i(\beta_i) -\beta_j$. Then $Q$ is a polynomial and it vanishes if and only if the rational function $P(X,\frac{aX+c}{X+b})$ does. By Corollary~\ref{lemma:prod3}, if $Q$ is nonzero its valuation is at most $\max_j (\alpha_j+2\binom{k+1-j}{2})$. We can deduce a Gap Theorem: For $1\le k_0\le k$, let
\[Q_0(X)=\sum_{j=1}^{k_0} a_j X^{\alpha_j}(aX+c)^{\beta_j}(X+b)^{\gamma_j}\]
and $Q_1=Q-Q_0$. Suppose that $\alpha_{k_0+1} > \max_{1\le j\le k_0}(\alpha_j+2\binom{k_0+1-j}{2})$. Then $Q$ vanishes identically if and only if $Q_0$ and $Q_1$ both vanish identically. Hence, $XY-(aX-bY+c)$ is a factor of $P$ if and only if it is a factor of both $P_0$ and $P_1$, defined by analogy with $Q_0$ and $Q_1$: $P_0$ is the sum of the $k_0$ first terms of $P$ and $P_1$ the sum of the $(k-k_0)$ last terms.

This proves that $P$ can be written as a sum of $P_{i\ell}$'s as in the proof of Theorem~\ref{thm:factorization} such that the multilinear factors of $P$ are the common multilinear factors of the $P_{i\ell}$'s, and such that each $P_{i\ell}$ is of the same form as $P$ and satisfies $\alpha_k,\beta_k\le 2\binom{k-1}{2}$. It thus remains to find the common multilinear factors of some low-degree polynomials. Since there are at most $k$ of them, this can be done in polynomial time.
\end{proof}

\section{Positive characteristic}\label{section:posChar}

As mentioned earlier, Theorem~\ref{thm:val} does not hold in positive characteristic. We considered the polynomial $(1+X)^{2^n}+(1+X)^{2^{n+1}}=X^{2^n}(X+1)$ in characteristic $2$. It only has two terms, but its valuation equals $2^n$. Therefore, its valuation cannot be bounded by a function of the number of terms. Note that this can be generalized to any positive characteristic. In characteristic $p$, one can consider the polynomial $\sum_{i=1}^p (1+X)^{p^{n+i}}$. 

Nevertheless, the exponents used in all the examples depend on the characteristic. In particular, the characteristic is always smaller than the largest exponent that appears. We shall show that in large characteristic, Theorem~\ref{thm:val} still holds. This contrasts with the previous result~\cite{KK05} that uses the notion of height of an algebraic number, and  which is thus not valid in any positive characteristic. 

In fact, Theorem~\ref{thm:val} holds as soon as $\wronskian(f_1,\dots,f_k)$ does not vanish. 
The difficulty in positive characteristic is that Proposition~\ref{prop:wronskian} does not hold anymore. Yet, the Wronskian is still related to linear independence by the following result (see~\cite{Kaplansky}):
\begin{proposition}\label{prop:positive}
Let $\KK$ be a field of characteristic $p$ and $f_1,\dots,f_k\in\KK[X]$. Then $f_1$, \dots, $f_k$ are linearly independent \emph{over $\KK[X^p]$} if and only if their Wronskian does not vanish.
\end{proposition}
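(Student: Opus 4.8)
The plan is to prove both directions, the forward one being routine and the converse by induction on $k$.

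\textbf{Easy direction (nonvanishing Wronskian $\Rightarrow$ independence over $\KK[X^p]$).} I would argue the contrapositive. Suppose $\sum_{j=1}^k g_j f_j = 0$ with $g_j \in \KK[X^p]$ not all zero. Since every $g \in \KK[X^p]$ has $g' = 0$ (the derivative of $X^{pm}$ is $pm\,X^{pm-1} = 0$ in characteristic $p$), Leibniz's rule gives $(g_j f_j)^{(i)} = g_j f_j^{(i)}$, hence $\sum_j g_j f_j^{(i)} = 0$ for every $i = 0,\dots,k-1$. Thus the nonzero vector $(g_1,\dots,g_k)$ witnesses a $\KK(X)$-linear dependence among the columns of the Wronskian matrix, so $\wronskian(f_1,\dots,f_k) = 0$.

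\textbf{Converse, by induction on $k$.} For $k=1$ the Wronskian is $f_1$ itself, and ``linearly independent over $\KK[X^p]$'' simply means $f_1 \neq 0$, so the two conditions agree. For the inductive step I would assume $\wronskian(f_1,\dots,f_k)=0$ and produce a $\KK[X^p]$-linear relation. If $W_{k-1}:=\wronskian(f_1,\dots,f_{k-1})=0$, the inductive hypothesis already yields such a relation among $f_1,\dots,f_{k-1}$; so assume $W_{k-1}\neq 0$. Set $c_j=(-1)^{k+j}M_j\in\KK[X]$, where $M_j$ is the minor of the $(k-1)\times k$ matrix $\big(f_j^{(i)}\big)_{0\le i\le k-2}$ obtained by deleting column $j$, so that $c_k=W_{k-1}\neq 0$. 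Expanding along the last row of the $k\times k$ matrix with rows $f^{(0)},\dots,f^{(k-2)},f^{(i)}$ shows $\sum_j c_j f_j^{(i)}=0$ for $0\le i\le k-2$ (a repeated row) and for $i=k-1$ (this is exactly $\wronskian(f_1,\dots,f_k)=0$). Differentiating the relations with $i\le k-2$ and substituting back the relation with index $i+1\le k-1$ gives $\sum_j c_j' f_j^{(i)}=0$ for $0\le i\le k-2$.

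\textbf{Closing the induction.} The homogeneous system $\sum_j x_j f_j^{(i)}=0$, $0\le i\le k-2$, has coefficient matrix of rank $k-1$ over $\KK(X)$, since the minor on its first $k-1$ columns is $W_{k-1}\neq 0$; hence its solution space is one-dimensional. Both $(c_j)_j$ and $(c_j')_j$ lie in it, and $(c_j)_j\neq 0$ because $c_k\neq 0$, so $c_j'=\lambda c_j$ for all $j$ and some $\lambda\in\KK(X)$. Then $h_j:=c_j/c_k$ satisfies $h_j'=(c_j'c_k-c_jc_k')/c_k^2=0$, and in characteristic $p$ the kernel of $d/dX$ on $\KK(X)$ is exactly $\KK(X^p)$, so $h_j\in\KK(X^p)$. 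From the $i=0$ relation $\sum_j c_j f_j=0$ we get $\sum_j h_j f_j=0$ with $h_k=1$; clearing a common denominator from $\KK[X^p]$ turns this into a nonzero $\KK[X^p]$-linear relation among $f_1,\dots,f_k$, completing the induction.

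\textbf{Main obstacle.} The delicate part is entirely in the converse: establishing the proportionality $(c_j')_j=\lambda(c_j)_j$ (which relies on the rank count and on differentiating the cofactor identities in the right order), together with the field-theoretic input that a rational function with zero derivative lies in $\KK(X^p)$. The latter is the standard description of the constants of the canonical derivation of $\KK(X)$ in positive characteristic, for which I would simply cite~\cite{Kaplansky}.
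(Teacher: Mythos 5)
Your proof is correct. Note that the paper does not actually prove Proposition~\ref{prop:positive}: it is stated as a known result with a citation to Kaplansky's book, so there is no in-paper argument to compare against. What you have written is essentially the classical proof of the positive-characteristic Wronskian criterion. Both halves check out: the forward direction correctly uses that elements of $\KK[X^p]$ are constants for $d/dX$, so a $\KK[X^p]$-relation among the $f_j$ propagates to all derivatives and forces a column dependence. In the converse, the cofactor expansion along an appended row $f^{(i)}$ gives $\sum_j c_j f_j^{(i)}=0$ for $0\le i\le k-1$ (repeated row for $i\le k-2$, the hypothesis $\wronskian=0$ for $i=k-1$), differentiating yields the same relations for $(c_j')_j$, and the rank count via $W_{k-1}\neq 0$ pins the solution space to dimension one, giving the proportionality $c_j'=\lambda c_j$ and hence $(c_j/c_k)'=0$. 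The identification of the constants of $\KK(X)$ with $\KK(X^p)$, and the final clearing of denominators to land in $\KK[X^p]$ rather than $\KK(X^p)$, are both handled correctly. The reduction to the case $W_{k-1}\neq 0$ via the inductive hypothesis is also sound. In short, you have supplied a complete, self-contained proof of a fact the paper treats as classical.
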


This allows us to give an equivalent of Theorem~\ref{thm:val} in large positive characteristic.

\begin{theorem}\label{thm:valPosChar}
Let $P=\sum_{j=1}^k a_j X^{\alpha_j}(1+X)^{\beta_j} \in\KK[X]$ 
with $\alpha_1\le\dotsb\le\alpha_k$. If 
the characteristic $p$ of $\KK$ satisfies
$p>\max_j(\alpha_j+\beta_j)$, then the valuation of $P$ is at most $\max_j (\alpha_j+\binom{k+1-j}{2})$, 
provided $P$ does not vanish identically.
\end{theorem}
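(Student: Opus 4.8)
The plan is to mimic the proof of Theorem~\ref{thm:val} essentially verbatim, replacing the appeal to Proposition~\ref{prop:wronskian} by the appeal to Proposition~\ref{prop:positive}, and checking that the characteristic hypothesis $p > \max_j(\alpha_j+\beta_j)$ is exactly what is needed to keep every step valid. First I would reduce, exactly as in the proof of Theorem~\ref{thm:val}, to the case where $\alpha_j,\beta_j \ge k$ for all $j$: the reduction multiplies $P$ by $X^k(1+X)^k$, which raises $\max_j(\alpha_j+\beta_j)$ by $2k$, so I would state the theorem for the shifted polynomial under the correspondingly shifted characteristic bound and then translate back (one has to be slightly careful that the bound $p > \max_j(\alpha_j+\beta_j)$ for the original $P$ should really be read, after the reduction, as a bound large enough for the shifted data; since the $\binom{k+1-j}{2}$ terms are tiny compared to what is available this causes no trouble, but I would spell out that $p$ exceeds all the quantities actually used).

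Next I would revisit Lemmas~\ref{lemma:valinf} and~\ref{lemma:valsup} and confirm they survive in characteristic $p$. Lemma~\ref{lemma:valinf} is purely about valuations of determinant expansions and uses no division, so it holds over any field. For Lemma~\ref{lemma:valsup} the only subtle point is that the Leibniz-rule computation involves the falling factorials $(\alpha_j)_t$ and $(\beta_j)_{i-t}$ with $t \le i \le k-1$ and $\alpha_j,\beta_j \le \max_j(\alpha_j+\beta_j) < p$: since $p > \alpha_j$ and $p > \beta_j$, none of these falling factorials vanishes modulo $p$ for the wrong reason, and more importantly the factoring-out of $X^{\alpha_j-k}(1+X)^{\beta_j-k}$ from each column and $X^{k-i}(1+X)^{k-i}$ from each row is a formal manipulation valid over any ring. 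The conclusion ``$\det(M)$ is nonzero, hence $\val(\det M) \le \deg(\det M) \le \binom{k}{2}$'' needs $\det M \ne 0$, which is where linear independence enters.

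The crux is therefore the analogue of the opening move in the proof of Theorem~\ref{thm:val}: I want to work with a linearly independent subfamily $f_{j_1},\dots,f_{j_d}$ of the $f_j = X^{\alpha_j}(1+X)^{\beta_j}$ so that their Wronskian is nonzero, and in characteristic $p$ Proposition~\ref{prop:positive} tells me I need linear independence \emph{over $\KK[X^p]$}, not merely over $\KK$. The hard part will be establishing that one can extract from $f_1,\dots,f_k$ a subfamily that is $\KK[X^p]$-linearly independent and in whose $\KK[X^p]$-span $P$ lies, and that this subfamily still has the index structure ($j_1 + d - 1 \le k$) needed to get the bound $\alpha_{j_1} + \binom{k+1-j_1}{2}$. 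Here the hypothesis $p > \max_j(\alpha_j+\beta_j) \ge \deg f_j$ is decisive: every $f_j$ has degree strictly less than $p$, so the natural map from the $\KK$-span of $\{f_1,\dots,f_k\}$ into $\KK[X]/(X^p$-stuff$)$ \dots more concretely, if $\sum_l g_l(X^p) f_{j_l} = 0$ with $g_l \in \KK[X^p]$ not all zero, comparing degrees and using $\deg f_{j_l} < p$ forces the $g_l$ to be constants, i.e. $\KK[X^p]$-linear independence of a family collapses to ordinary $\KK$-linear independence for families of polynomials of degree $< p$. So I would prove a short lemma: \emph{polynomials of degree $< p$ are $\KK[X^p]$-linearly independent iff they are $\KK$-linearly independent}, extract a $\KK$-basis $f_{j_1},\dots,f_{j_d}$ of the span as before, note $P = \sum_l \tilde a_l f_{j_l}$ with $\tilde a_l \in \KK \subseteq \KK[X^p]$, invoke Proposition~\ref{prop:positive} to get $\wronskian(f_{j_1},\dots,f_{j_d}) \ne 0$, and then run the column-replacement argument (replace $f_{j_1}$ by $P$, picking up a factor $\tilde a_1 \ne 0$) together with Lemmas~\ref{lemma:valinf} and~\ref{lemma:valsup} to conclude $\val(P) \le \alpha_{j_1} + \binom{d}{2} \le \alpha_{j_1} + \binom{k+1-j_1}{2} \le \max_j(\alpha_j + \binom{k+1-j}{2})$. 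Finally I would undo the $X^k(1+X)^k$ reduction. The only genuinely new content beyond Theorem~\ref{thm:val} is the degree-$<p$ linear-independence lemma and the bookkeeping that $p$ stays above every quantity touched; everything else is a transcription of the characteristic-zero argument.
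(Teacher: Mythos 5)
Your proposal is correct and follows the paper's proof essentially verbatim: the paper's entire argument is exactly your degree-$<p$ lemma (write a $\KK[X^p]$-dependence as $\sum_i X^{ip}\sum_j b_{i,j}f_j=0$ and observe the blocks cannot interact since $\deg f_j<p$, so $\KK[X^p]$-independence collapses to $\KK$-independence), after which Proposition~\ref{prop:positive} and the characteristic-zero machinery run unchanged. The one bookkeeping point to tighten is the $X^k(1+X)^k$ reduction: rather than strengthening the hypothesis on $p$ to cover the shifted degrees $\alpha_j+\beta_j+2k$ (which would prove a slightly weaker statement), apply your independence lemma to the original $f_j$'s and note that multiplying each by the fixed nonzero polynomial $g=X^k(1+X)^k$ preserves both $\KK$-linear independence and nonvanishing of the Wronskian (since $\wronskian(gf_1,\dotsc,gf_k)=g^k\,\wronskian(f_1,\dotsc,f_k)$), so the hypothesis $p>\max_j(\alpha_j+\beta_j)$ suffices exactly as stated.
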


\begin{proof}
Let 
$f_j=X^{\alpha_j}(1+X)^{\beta_j}$ for $1\le j\le k$. 
The proof of Theorem~\ref{thm:val} has two steps: We prove that we can assume that the Wronskian of the $f_j$'s does not vanish, and then under this assumption we get a bound of the valuation of the polynomial. The second part only uses the non-vanishing of the Wronskian and can be used here too. We are left with proving that the Wronskian of the $f_j$'s can be assumed to not vanish when the characteristic is large enough.

Assume that the Wronskian of the  $f_j$'s 
is zero: 
By Proposition~\ref{prop:wronskian}, there is a vanishing linear combination of the $f_j$'s 
with coefficients $b_j$ in
$\KK[X^p]$. Let us write $b_j= \sum b_{i,j}X^{ip}$. Then  $\sum_i X^{ip}\sum_j b_{i,j}f_j=0$.
Since $\deg f_j=\alpha_j+\beta_j<p$, $\sum_j b_{i,j}f_j=0$ for all $i$.
We have thus proved that there is a linear combination of the $f_j$'s equal to zero with coefficients
in $\KK$. Therefore, we can assume we have a basis of the $f_j$'s whose Wronskian
is nonzero and use the same argument as for the  characteristic zero.
\end{proof}

Based on this result, the algorithms we develop in characteristic zero for PIT and factorization can be used for large enough characteristics. 
Computing with lacunary polynomials in positive characteristic has been shown to be hard in many cases~\cite{vzGKS96,KaShp99,KiSha99,KK05,BCR12,KL13}. In particular, it is shown in a very recent paper that it is $\NP$-hard to find roots in $\FFp$ for polynomials over $\FFp$~\cite{BCR12}.

Let $\FFps$ be the field with $p^s$ elements for $p$ a prime number and $s>0$. In the algorithms, it is given as $\FFp[\xi]/\langle\varphi\rangle$ where $\varphi$ is a monic irreducible polynomial of degree $s$ with coefficients in $\FFp$.

\begin{theorem}
Let $P=\sum_{j=1}^k a_j X^{\alpha_j}(uX+v)^{\beta_j} \in\FFps[X]$, 
where $p>\max_j(\alpha_j+\beta_j)$.
There exists a polynomial-time deterministic algorithm to test if $P$ vanishes identically.
\end{theorem}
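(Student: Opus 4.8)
The plan is to mimic the proof of Theorem~\ref{thm:pit} from Section~\ref{sec:algo}, replacing the use of Theorem~\ref{thm:val} by its positive-characteristic analogue, Theorem~\ref{thm:valPosChar}. First I would reduce to the nontrivial case $u,v\neq0$: if $u=0$ the polynomial is an explicit sum of monomials and each coefficient is of the form $\sum_j a_j v^{\beta_j}$, which can be tested for zero by arithmetic in $\FFps$ (the case $v=0$ is analogous); and one may normalize $\alpha_1=0$ by dividing out $X^{\alpha_1}$. The hypothesis $p>\max_j(\alpha_j+\beta_j)$ is exactly what is needed to invoke Theorem~\ref{thm:valPosChar}, so a Gap Theorem of the same shape as Theorem~\ref{thm:gap} holds verbatim in this setting: the proof of Theorem~\ref{thm:gap} only uses the valuation bound $\max_j(\alpha_j+\binom{\ell+1-j}{2})$, and that bound is guaranteed here.

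Next I would carry out the splitting step exactly as in the proof of Theorem~\ref{thm:pit}: partition $\{1,\dots,k\}$ into the canonical intervals $I_1,\dots,I_s$ defined by the gap condition, obtaining $P=P_1+\dots+P_s$ with $P_t=\sum_{j\in I_t} a_j X^{\alpha_j}(uX+v)^{\beta_j}$, so that by the Gap Theorem $P\equiv0$ iff each $P_t\equiv0$. Factoring out $X^{\alpha_{i_t}}$ from $P_t$, we are reduced to testing polynomials $Q$ of the form~\eqref{eq:linear} with $\alpha_1=0$ and $\alpha_{j+1}\le\binom{j}{2}$, hence $\alpha_k\le\binom{k-1}{2}$. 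Applying the change of variables $Y=uX+v$ and expanding $(Y-v)^{\alpha_j}$ by the binomial theorem gives $Q(Y)=\sum_j\sum_{\ell=0}^{\alpha_j} a_j u^{-\alpha_j}\binom{\alpha_j}{\ell}(-v)^{\ell} Y^{\alpha_j+\beta_j-\ell}$, a polynomial with at most $k\binom{k-1}{2}=\mathcal O(k^3)$ monomials; testing $Q\equiv0$ amounts to testing each coefficient, a sum over at most $k$ indices of elements of $\FFps$ of polynomial bit-length (the exponents $\ell_j,\alpha_j$ are bounded by $\binom{k-1}{2}$, so the binomial coefficients, reduced mod $p$, and the powers $u^{-\alpha_j},(-v)^{\ell_j}$ are all cheap to compute).

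One small point to verify carefully is that dividing by $u^{\alpha_j}$ is legitimate: since $u\neq0$ in $\FFps$ it is invertible, and raising to the power $\alpha_j$ and inverting costs time polynomial in $\log\alpha_j$ and $s\log p$, so this causes no difficulty. The main obstacle — which is really already handled by the preceding results — is ensuring the valuation bound of Theorem~\ref{thm:valPosChar} actually applies to each $P_t$ after the reductions; this is fine because the normalizations (dividing out powers of $X$, the change of variable $Y=uX+v$) do not increase $\max_j(\alpha_j+\beta_j)$ beyond what the hypothesis already controls, and in the reduced form the relevant degrees are $\mathcal O(k^2)$, far below $p$. Putting these pieces together yields the claimed deterministic polynomial-time identity test, and I would close by remarking that all arithmetic is performed in $\FFps=\FFp[\xi]/\langle\varphi\rangle$, whose operations are polynomial in $s$ and $\log p$, so the overall running time is polynomial in $\size(P)$.
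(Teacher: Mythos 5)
Your proposal is correct and takes essentially the same route as the paper: the paper's proof likewise mimics Theorem~\ref{thm:pit} with Theorem~\ref{thm:valPosChar} in place of Theorem~\ref{thm:val}, handles the cases $u=0$ or $v=0$ by direct evaluation via repeated squaring in $\FFps$, and observes that the binomial coefficients in the final coefficient tests are computed modulo $p$. No gaps to report.
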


The proof of this theorem is very similar to the proof of Theorem~\ref{thm:pit}, using Theorem~\ref{thm:valPosChar} instead of Theorem~\ref{thm:val}. 
The main difference 
occurs when $u=0$ or $v=0$. In these cases, we rely in characteristic zero on an external algorithm to test sums of the form $\sum_j a_jv^{\beta_j}$ for zero. This external algorithm does not work in positive characteristic, but these tests are actually much simpler. These sums can be evaluated using repeated squaring in time polynomial in $\log\beta_j$, that is polynomial in the input length.

Note that the condition $p>\max_j(\alpha_j+\beta_j)$ means that $p$ has to be greater than the degree of $P$. This condition is a fairly natural condition for many algorithms dealing with polynomials over finite fields, especially prime fields, for instance for root finding algorithms~\cite{BCR12}.

The basic operations in the algorithm are operations in the ground field $\FFp$. Therefore, the result also holds if bit operations are considered. The only place where computations in $\FFps$ have to be performed in the algorithm is the tests for zero of coefficients of the form $\sum_j\binom{\alpha_j}{\ell_j} a_j u^{-\alpha_j}(-v)^{\ell_j}$ 
where the $\alpha_j$'s and $\ell_j$'s are integers and $a_j\in\FFps$, and the sum has at most $k$ terms. The binomial coefficient is to be computed \emph{modulo} $p$ using for instance Lucas' Theorem~\cite{Lucas1878}.

We now turn to the problem of finding linear factors of lacunary bivariate polynomials.

\begin{theorem}\label{thm:FactorPosChar}
Let $P=\sum_j a_j X^{\alpha_j} Y^{\beta_j}\in\FFps[X,Y]$, where $p>\max_j(\alpha_j+\beta_j)$. 
There exists a probabilistic polynomial-time algorithm to find all the linear factors of $P$ 
of the form $(uX+vY+w)$ with $uvw\neq 0$.

Furthermore, deciding the existence of factors of the form $(X-w)$, $(Y-w)$ or $(X-wY)$ with $w\neq 0$ is $\NP$-hard under randomized reductions.
\end{theorem}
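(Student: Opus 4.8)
The plan is to prove the two halves of the statement separately. For the algorithmic half I would follow the proof of Theorem~\ref{thm:factorization}, now over $\FFps$ and with the positive-characteristic valuation bound. A factor $(uX+vY+w)$ with $uvw\neq0$ is, after dividing by $-v$, of the form $(Y-u'X-w')$ with $u',w'\neq0$, and it divides $P$ if and only if $P(X,u'X+w')$ vanishes identically. Since $p>\max_j(\alpha_j+\beta_j)=\deg P$, Theorem~\ref{thm:valPosChar} holds, and with it the Gap Theorem obtained from it exactly as Theorem~\ref{thm:gap} is obtained from Theorem~\ref{thm:val}; this applies to $P(X,u'X+w')=\sum_j a_jX^{\alpha_j}(u'X+w')^{\beta_j}$, which is of the form~\eqref{eq:linear}. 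I would then split $P$ along its gaps -- first among the $\alpha_j$, then, after exchanging the roles of $X$ and $Y$, among the $\beta_j$ -- producing $\mathcal{O}(k)$ dense bivariate polynomials $P_{i\ell}$ of degree $\mathcal{O}(k^2)$ such that $P(X,u'X+w')$ vanishes if and only if every $P_{i\ell}(X,u'X+w')$ vanishes. Since this splitting depends only on the exponents, not on $u',w'$, the factors $(Y-u'X-w')$ of $P$ with $u',w'\neq0$ are precisely those shared by all the $P_{i\ell}$; it therefore suffices to factor the $P_{i\ell}$ over $\FFps$, intersect the resulting lists of linear factors, and keep those of the prescribed shape. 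The only step that needs randomness is this factorization, since dense bivariate factorization over a finite field runs in probabilistic polynomial time (for instance by reducing to the univariate case via a generic linear substitution and applying a Cantor--Zassenhaus-type algorithm). Unlike in Theorem~\ref{thm:factorization}, no degenerate case requires separate treatment, because $u'=0$ or $w'=0$ would yield a factor with $uvw=0$, which is not asked for.

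For the hardness half I would reduce from the problem of deciding whether a univariate lacunary polynomial $f=\sum_j a_jX^{e_j}\in\FFp[X]$ has a root in $\FFp\setminus\{0\}$, which is $\NP$-hard under randomized reductions~\cite{BCR12}; here one may assume $\deg f<p$, since for $w\in\FFp\setminus\{0\}$ one has $f(w)=\tilde f(w)$ with $\tilde f=f\bmod(X^{p-1}-1)$ still sparse and of degree below $p$. Taking $P(X,Y):=f(X)$, i.e.\ all $\beta_j=0$, gives $(X-w)\mid P$ if and only if $f(w)=0$, while $p>\max_j(\alpha_j+\beta_j)=\deg f$, so deciding the existence of an $(X-w)$ factor with $w\neq0$ is $\NP$-hard; the case $(Y-w)$ is symmetric, with $P:=f(Y)$. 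For $(X-wY)$, take the homogenisation $P(X,Y):=\sum_j a_jX^{e_j}Y^{d-e_j}$ with $d=\max_j e_j$; then $P(wY,Y)=Y^d f(w)$, so $(X-wY)\mid P$ if and only if $f(w)=0$. Each reduction runs in polynomial time, respects the hypothesis on $p$, produces an instance over $\FFp\subseteq\FFps$, and inherits its randomness from the cited result.

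The point that needs the most care is the positive-characteristic bookkeeping in the algorithm: one must check that $p>\deg P$ really suffices to invoke the Gap Theorem at every stage. This works because the substitutions $Y\mapsto u'X+w'$ and $X\mapsto(Y-w')/u'$ preserve the total degree of every monomial, while the first round of splitting only decreases exponents, so every polynomial to which Theorem~\ref{thm:valPosChar} is applied has degree at most $\deg P<p$. Getting this count right -- and thereby delimiting exactly which linear factors the method reaches, the remaining shapes being covered by the hardness half -- is the crux of the argument.
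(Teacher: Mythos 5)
Your proposal is correct and follows essentially the same route as the paper: the algorithm is the characteristic-zero one of Theorem~\ref{thm:factorization} with Theorem~\ref{thm:valPosChar} supplying the Gap Theorem and a probabilistic dense factorization subroutine, and the hardness half is the same trio of reductions (constant polynomial in $Y$, in $X$, and homogenisation) from root-finding for lacunary univariate polynomials over $\FFp$. Your added bookkeeping --- reducing modulo $X^{p-1}-1$ so the reduction respects $p>\deg P$, and checking that the splitting and substitutions never increase degrees beyond $p$ --- is a welcome refinement of details the paper leaves implicit.
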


\begin{proof}
The second part of the theorem is the consequence of the $\NP$-hardness (under randomized reductions) of finding roots in $\FFps$ of lacunary univariate polynomials with coefficients in $\FFps$~\cite{KiSha99,BCR12,KL13}: Let $Q$ be a lacunary univariate polynomial over $\FFps$, and define $P(X,Y)=Q(X)$. Then $P$ has the same form as in the theorem with $\beta_j=0$ for all $j$, and factors of the form $(X-w)$ of $P$ are in one-to-one correspondence with roots $w$ of $Q$. Thus, detecting such factors is $\NP$-hard under randomized reductions. The same applies to factors of the form $(Y-w)$. Finally, let us now define $P$ as the homogeneization of $Q$, that is $P(X,Y)=Y^{\deg(Q)}Q(X/Y)$. Then, $P(wY,Y)=Y^{\deg(Q)}P(w,1)=Y^{\deg(Q)}Q(w)$. In other words, factors of $P$ of the form $(X-wY)$ correspond to roots $w$ of $Q$. Thus detecting such factors is also $\NP$-hard under randomized reduction.

For the first part, the algorithm we propose is actually the same as in characteristic zero (Theorem~\ref{thm:factorization}). This means that it relies on known results for factorization of dense polynomials. Yet, the only polynomial-time algorithms known for factorization in positive characteristic are probabilistic~\cite{vzGG03}. Therefore our algorithm is probabilistic and not deterministic as in characteristic zero. 
\end{proof}

\newpage
\appendix

\section{Proof of Theorem~\ref{thm:generalization}} \label{app:generalization}

Let $P_j=\prod_{i=1}^m f_i^{\alpha_{ij}}$ for $1\le j\le k$. As in the proof of Theorem~\ref{thm:val}, we first assume that the $P_j$'s are linearly independent, and the $\alpha_{ij}$'s not less than $(k-1)$. 

We can use a generalized Leibniz rule to compute the derivatives of the $P_j$'s. Namely
\begin{equation}\label{eq:leibnizP}
P_j^{(l)}=\sum_{t_1+\dotsb+t_m=l} \binom{l}{t_1,\dotsc,t_m} \prod_{i=1}^m (f_i^{\alpha_{ij}})^{(t_i)},
\end{equation}
where $\binom{l}{t_1,\dotsc,t_m}$ is the multinomial coefficient. 
Consider now a derivative of the form $(f^\alpha)^{(t)}$. This is a sum of terms, each of which contains a factor $f^{\alpha-t}$. (The worst case happens when $t$ \emph{different} copies of $f$ have been each derived once.) 
In Equation~\eqref{eq:leibnizP}, each $t_i$ is bounded by $l$. This means that $P_j^{(l)}=Q_{l,j} \prod_i f_i^{\alpha_{ij}-l}$ for some polynomial $Q_{l,j}$. 
Since the degree of $P_j^{(l)}$ equals $\sum_i d_i\alpha_{ij}-l$, 
$Q_{l,j}$ 
has degree $\sum_id_i\alpha_{ij}-l-\sum_i(d_i\alpha_{ij}-d_il)=(\sum_id_i-1)l$. 

Consider now the Wronskian $\wronskian$ of the $P_j$'s. We can factor out 
in each column $\prod_i f_i^{\alpha_{ij}-k+1}$ and in each row $\prod_if_i^{k-1-l}$. 
At row $l$ and column $j$, we therefore factor out $\prod_i f_i^{\alpha_{ij}-k+1} \cdot\prod_i f_i^{k-1-l}=\prod_i f_i^{\alpha_{ij}-l}$.
Thus, 
\[\wronskian = \prod_{i=1}^m f_i^{\sum_j\alpha_{ij}-\binom{k}{2}} \det M\]
where $M_{l,j}=Q_{l,j}$. Thus, $\det M$ is a polynomial of degree at most $(\sum_id_i-1)\binom{k}{2}$. 

Therefore, the multiplicity $\mu_\wronskian(\xi)$ of $\xi$ as a root of $\wronskian$ is bounded by its multiplicity as a root of $\prod_i f_i^{\sum_j\alpha_{ij}-\binom{k}{2}}$ plus the degree of $\det M$. We get
\begin{align} 
\mu_\wronskian(\xi)
    &\le \sum_i \mu_i\left(\sum_j\alpha_{ij}-\binom{k}{2}\right)+(\sum_id_i-1)\binom{k}{2} \nonumber\\
    &=\sum_i\left(\mu_i\sum_j\alpha_{ij}+(d_i-\mu_i)\binom{k}{2}\right)-\binom{k}{2}.\label{eq:multUpBd}
\end{align}

To conclude the proof, it remains to remember Lemma~\ref{lemma:valinf} and use the same proof technique as in Theorem~\ref{thm:val}. It was expressed in terms of the valuation of the polynomials, but remains valid with the multiplicity of a root. In this case, it can be written as $\mu_\wronskian(\xi)\ge\sum_j\mu_{P_j}(\xi) -\binom{k}{2}$ where $\wronskian$ is the Wronskian of the $P_j$'s. Using column operations, we can replace the first column of the Wronskian matrix of the $P_j$'s by the polynomial $P$ and its derivatives. We get $\mu_\wronskian(\xi)\ge\mu_P(\xi)+\sum_{j\ge 2}\mu_{P_j}(\xi)-\binom{k}{2}$, where $\mu_{P_j}(\xi)=\sum_i\mu_i\alpha_{ij}$.

Together with \eqref{eq:multUpBd}, 
we get
\begin{align*}
\mu_P(\xi) & \le \mu_\wronskian(\xi)-\sum_{j\ge 2}\mu_{P_j}(\xi)+\binom{k}{2} \\
           & \le \sum_i\left(\mu_i\sum_j\alpha_{ij}+(d_i-\mu_i)\binom{k}{2}\right)-\binom{k}{2}
                                                -\sum_{j\ge 2}\sum_i\mu_i\alpha_{ij} 
                                                +\binom{k}{2}\\
           & \le \sum_i\left(\mu_i\alpha_{i1} +(d_i-\mu_i)\binom{k}{2}\right).
\end{align*}

It remains to remove our two assumptions. If the $P_j$'s are not linearly independent, we can extract a basis $(P_{j_1},\dots,P_{j_d})$. We obtain $\mu_P(\xi)\le\sum_i\left(\mu_i\alpha_{ij_1}+(d_i-\mu_i)\binom{d}{2}\right)$. Since $d\le k+1-j_1$, we have 
\[\mu_P(\xi)\le \max_{1\le j\le k} \sum_{i=1}^m \left(\mu_i\alpha_{ij} + (d_i-\mu_i)\binom{k+1-j}{2}\right).\]
The second assumption is that $\alpha_{ij}\ge k-1$ for all $i$ and $j$. Let 
\[\tilde P = P\cdot\prod_if_i^{k-1}=\sum_j a_j\prod_i f_i^{\tilde{\alpha_{ij}}}.\] 
Since $\tilde{\alpha_{ij}}=\alpha_{ij}+k-1\ge k-1$,
\begin{align*}
\mu_{\tilde P}(\xi)&\le \max_{1\le j\le k} \sum_{i=1}^m \left(\mu_i\tilde{\alpha_{ij}} + (d_i-\mu_i)\binom{k+1-j}{2}\right)\\
    & = (k-1)\sum_{i=1}^m\mu_i + \max_{1\le j\le k} \sum_{i=1}^m \left(\mu_i \alpha_{ij} + (d_i-\mu_i)\binom{k+1-j}{2}\right).
\end{align*}
Since $\mu_{\tilde P}(\xi)=\mu_P(\xi)+(k-1)\sum_i\mu_i$, the result follows.

\begin{remark}
The ordering of the $P_j$'s in the theorem is arbitrary. Yet the value of the bound depends on this ordering. Therefore, it is possible to optimize this bound by using the ordering on the $P_j$'s that minimizes the bound. Let us define
\[s_j=\sum_i\left(\mu_i\alpha_{ij}+(d_i-\mu_i)\binom{k+1-j}{2}\right).\]
The theorem states that $\mu_P(\xi)\le\max_j s_j$. Let $j_1<j_2$ such that $\sum_i \mu_i\alpha_{ij_1}\ge\sum_i\alpha_{ij_2}$. Then $s_{j_1}>s_{j_2}$. These two terms appear in the maximum when $P_{j_1}$ is before $P_{j_2}$ in the ordering. If $P_{j_1}$ and $P_{j_2}$ are exchanged, the two terms are replaced by $\sum_i(\mu_i\alpha_{ij_1}+(d_i-\mu_i)\binom{k+1-j_2}{2})$ and $\sum_i(\mu_i\alpha_{ij_2}+(d_i-\mu_i)\binom{k+1-j_1}{2}$. Neither term is greater than $s_{j_1}$. This means that an exchange of $P_{j_1}$ and $P_{j_2}$ in the ordering cannot increase the bound in the theorem. 

This proves that to minimize the bound the $P_j$'s must be ordered with respect to the value of $\sum_i\mu_i\alpha_{ij}$. This is consistent with the order on the $\alpha_j$'s chosen in Theorem~\ref{thm:val}. We also note that the bound in Theorem~\ref{thm:val} is exactly recovered as a special case. 
\end{remark}

\end{document}